\newtheorem{remark}{Remark}
\newtheorem{lemma}{Lemma}
\newtheorem{cor}{Corollary}
\newtheorem{proposition}{Proposition}
\newtheorem{theorem}{Theorem}
\def\qedsymbol{\ensuremath{\Box}}      
\def\qed{\ifhmode\unskip\nobreak\fi\quad\qedsymbol}     
\def\frqed{\ifhmode\nobreak\hbox to5pt{\hfil}\nobreak%
	\hskip 0pt plus1fill\nobreak\fi\quad\qedsymbol\renewcommand{\qed}{}} 
\def\QEDsymbol{\vrule width.6em height.5em depth.1em\relax}
\def\frQED{\ifhmode\nobreak\hbox to5pt{\hfil}\nobreak%
	\hskip 0pt plus1fill\nobreak\fi\quad\QEDsymbol\renewcommand{\qed}{}} 
\def\QED{\ifhmode\unskip\nobreak\fi\quad\QEDsymbol}     
\newtheorem{assumption}{Assumption}
\newcommand{\sdb}[1]{\textcolor{red}{#1}}
\title{\LARGE \bf
\texttt{FlipDyn}: A game of resource takeovers in dynamical systems
}
\author{Sandeep~Banik and~Shaunak~D.~Bopardikar
	\thanks{The authors are with the Department of Electrical and Computer Engineering at Michigan State University, East Lansing, MI, USA. Emails: \texttt{baniksan@msu.edu; shaunak@egr.msu.edu}}}
\begin{document}

\maketitle
\thispagestyle{empty}
\pagestyle{empty}

\begin{abstract}
We introduce a game in which two players with opposing objectives seek to repeatedly takeover a common resource. The resource is modeled as a discrete time dynamical system over which a player can gain control after spending a state-dependent amount of energy at each time step. We use a FlipIT-inspired deterministic model that decides which player is in control at every time step. A player's policy is the  probability with which the player should spend energy to gain control at each time step. Our main results are three-fold. First, we present analytic expressions for the cost-to-go as a function of the hybrid state of the system, i.e., the physical state of the dynamical system and the binary \texttt{FlipDyn} state for any general system with arbitrary costs. These expressions are exact when the physical state is also discrete and has finite cardinality. Second, for a continuous physical state with linear dynamics and quadratic costs, we derive expressions for Nash equilibrium (NE). For scalar physical states, we show that the NE depends only on the parameters of the value function and costs, and is independent of the state. Third, we derive an approximate value function for higher dimensional linear systems with quadratic costs. Finally, we illustrate our results through a numerical study on the problem of controlling a linear system in a given environment in the presence of an adversary. 
\end{abstract}

\section{Introduction}
Rising automation, inexpensive computation and  proliferation of the Internet of Things have made cyber-physical systems (CPS) ubiquitous in industrial control systems, home automation, autonomous vehicles, smart grids and medical devices~\cite{rajkumar2010cyber,baheti2011cyber}. However, increased levels of connectivity and ease of operations also makes CPS vulnerable to cyber and physical attacks~\cite{cardenas2008research,parkinson2017cyber}. An adversarial takeover can drive the system to undesirable states or can even permanently damage the system causing disruption in services and potential loss of lives. Therefore, it becomes imperative to develop policies to continuously scan for adversarial behavior while striking a balance between operating costs and system integrity. This paper proposes an approach to model and analyze the problem of resource takeovers in CPS.   

\smallskip

Consider an adversary that has access to a prototypical CPS control loop and can achieve a  takeover at various points shown in Figure~\ref{fig:FlipIt_general_dynamics}. These points include the (i) reference inputs, (ii) actuator,  (iii) state, (iv) sensor and (v) control output, thereby affecting the system performance. As opposed to conventional adversaries perturbing the states of the system (actuator attack) or measurements (integrity attack)~\cite{fawzi2012security}, this work supposes that an adversary completely takes over a resource and can transmit arbitrary values originating from the controlled resource. 


There has been a lot of recent research into CPS security in the controls community. The work~\cite{bai2015security} focuses on resilience against an adversary who can hijack and replace the control signal while remaining undetected to minimize the control performance. This idea is generalized in~\cite{bai2017data,katewa2021detection} for any linear stochastic system to determine its detectability, while quantifying performance degradation. 
Reference~\cite{amin2013cyber} developed model-based observers to detect and isolate such stealthy deception attacks to make water SCADA systems resilient. The authors in~\cite{chang2018secure} developed a secure estimator in conjunction with a Kalman filter for CPS where the set of attacked sensor can change over time. 

\smallskip

Game theory has also been extensively applied to model CPS security problems. A two-player nonzero sum game with asymmetric information and resource constraints between a controller and a jammer was introduced in~\cite{gupta2016dynamic}. Reference~\cite{chen2017security} studies contract design at the physical layer to ensure cloud security quality of service. Similarly, a two-player dynamic game between a network designer and an adversary is used to determine policies to keep the infrastructure networks of a CPS protected and enable recovery under an attack~\cite{chen2019dynamic}. A range of works in designing physical and cyber security policies using game-theoretical frameworks are presented in~\cite{zhu2015game}. A non-cooperative game between a defender (contractive controller) and an adversary (expanding controller) was presented in~\cite{kontouras2014adversary}, limited to finite and fixed periods of control by each player. Covert attacks driving the system states outside the performance set competing against a contractive control subject to control and state constraints were studied in~\cite{kontouras2015covert}. 

\smallskip

The setup in this paper is inspired by the cybersecurity game of stealthy takeover known as FlipIt~\cite{van2013flipit}. FlipIT is a two-player game between an adversary and defender competing to control a shared resource. The resource can be represented as a critical digital system such as a computing device, virtual machines or a cloud service~\cite{bowers2012defending}. The authors in~\cite{laszka2014flipthem} extend the FlipIT model to a more general framework of multiple resource takeovers, termed as FlipThem, where an attacker has to compromise all the resources or only one in order to take over the entire system. The FlipIT model has also been applied to supervisory control and data acquisition (SCADA)~\cite{liu2021flipit} system, commonly used in industrial automation, by deriving the probability distribution of time-to-compromise the system and evaluating the impacts of insider assistance for an adversary. Largely, the FlipIT setups have been limited to a \emph{static system}, i.e, the payoff does not change over time, whereas in this work, we model the takeover of a \emph{dynamical} system between an adversary and a defender.

In this paper, we assume that a controller is already given for both the defender and an adversary. What is not known are the time instants at which each player should act to takeover the system. Thus, our set-up generalizes the formulation considered in \cite{kontouras2014adversary,kontouras2015covert} by explicitly attaching state-dependent costs on each player (controller). 


\begin{figure*}[h]
	\begin{center}
		\subfloat[]{\includegraphics[width = 0.28\linewidth]{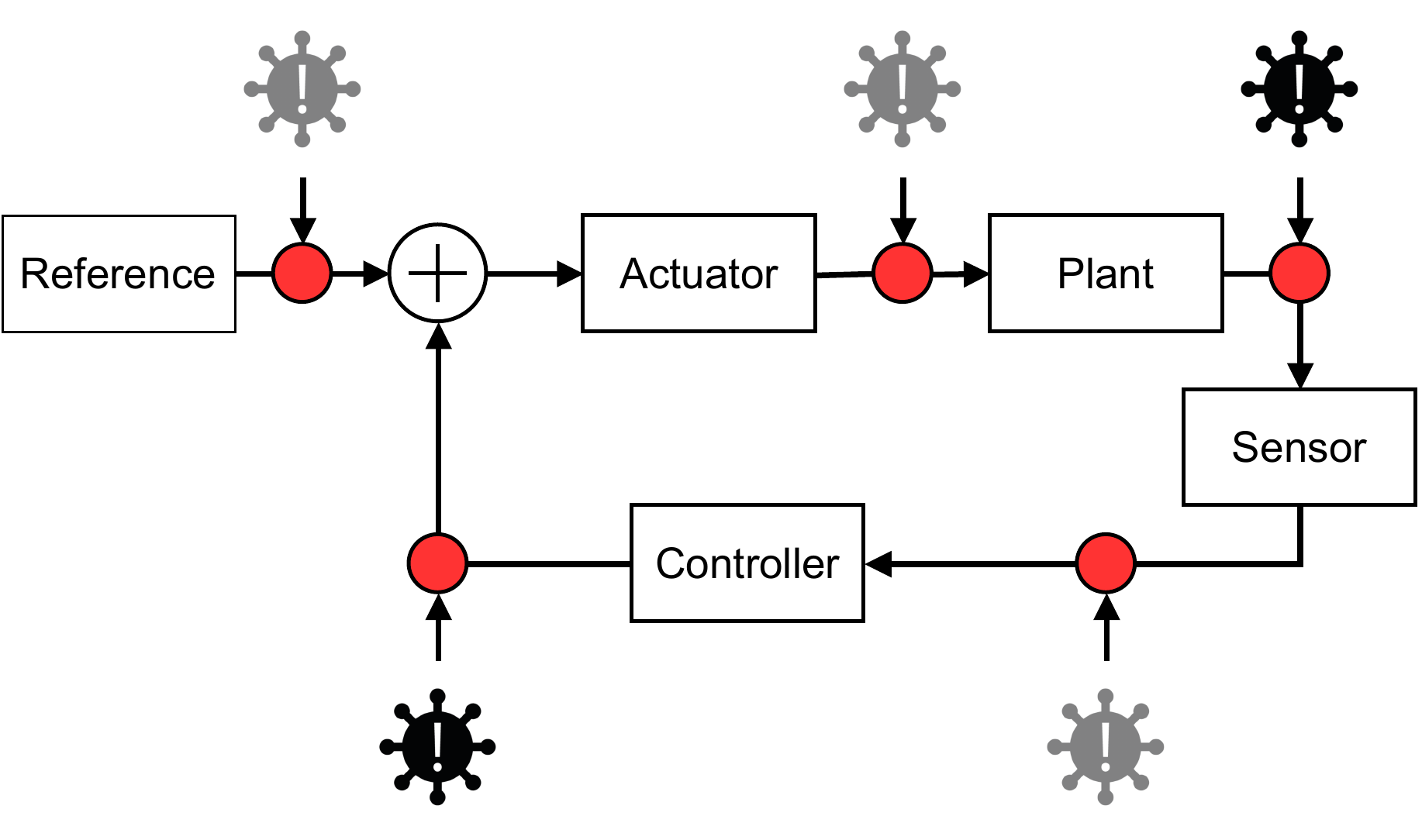}
			\label{fig:FlipIt_general_dynamics}	
		}
		\subfloat[]{\includegraphics[width = 0.30\linewidth]{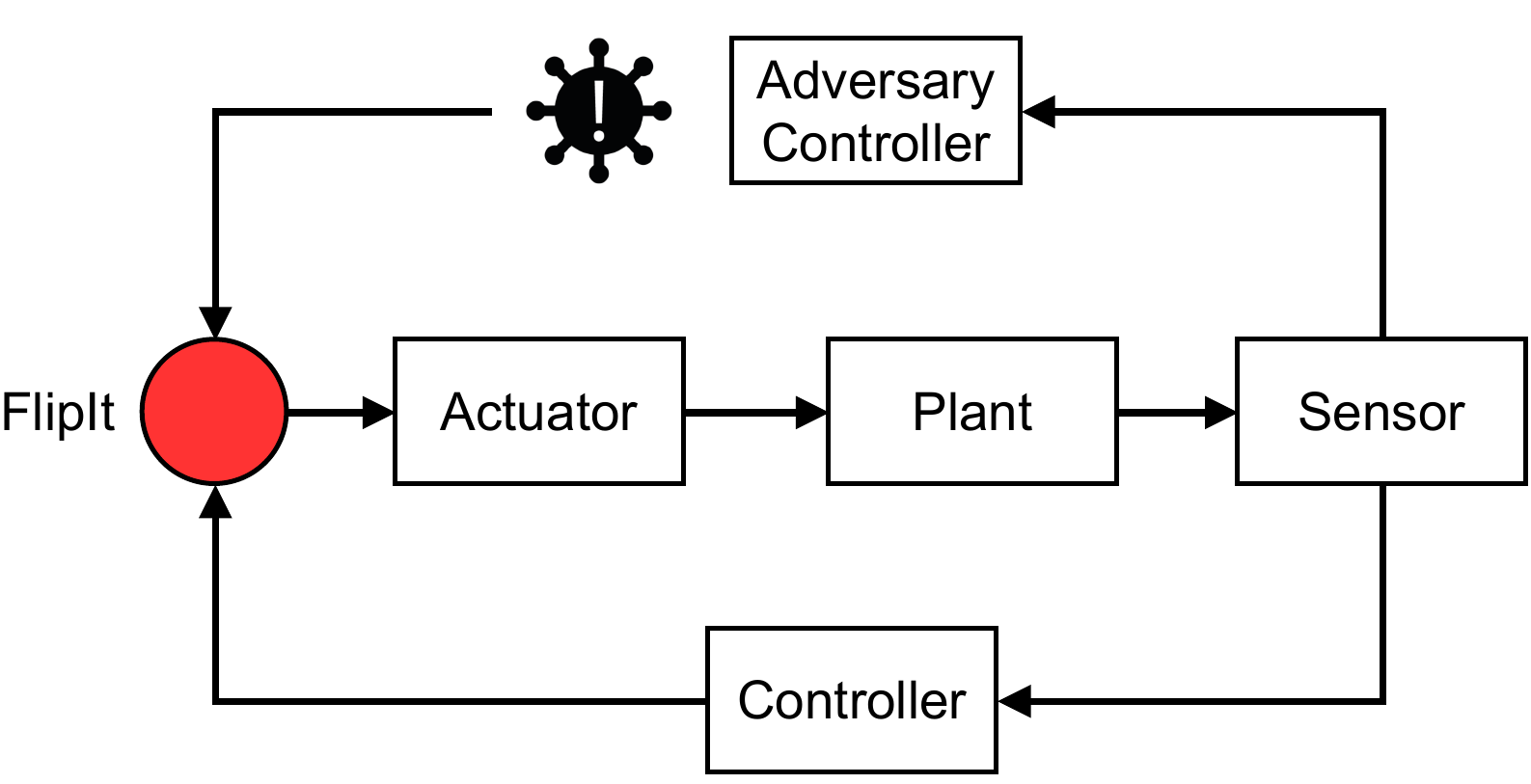}
			\label{fig:FlipIt_control_dynamics}	
		}
		\subfloat[]{\includegraphics[width = 0.36\linewidth]{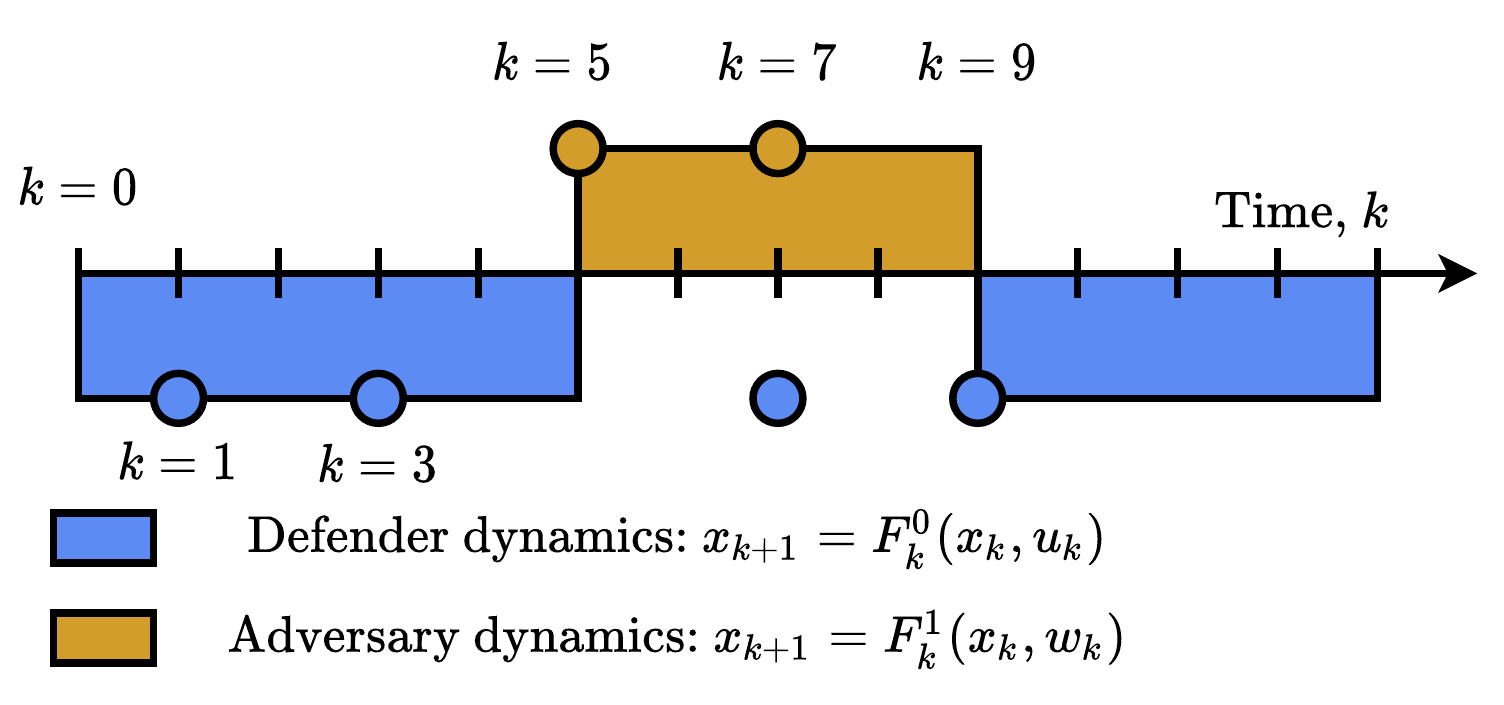}
		\label{fig:FlipDyn_sample}
		}
		\caption{\small (a) Closed-loop system with adversaries present at various locations infecting the reference values, actuator, plant, measurement output and control input. (b) Closed-loop system with the adversary present between the controller and actuator trying to takeover the control signals. A FlipIt is setup over the control signal between the defender and adversarial control. (c) Sample sequence of the \texttt{FlipDyn} game with the defender action and takeover indicated by the blue circles and region, respectively. Similarly, the adversary action and takeover time period are indicated by the red circles and region, respectively.}	
		\label{fig:FlipIt_dynamics}
	\end{center}
	\vspace{-0.25in}
\end{figure*} 
The contributions of this paper are three-fold.

\textbf{1. Game-theoretic modeling of dynamic resource takeover:} We model a two-player zero-sum game between a defender and an adversary trying to takeover a dynamical system (resource). We term this as the \texttt{FlipDyn} game. Given the controllers used by each player, this model accounts for state-dependent takeover costs subject to the system dynamics when controlled by either player.

\textbf{2. FlipDyn control for any general system:} We characterize the Nash equilibrium (NE) of the \texttt{FlipDyn} game as a function of both the continuous state of the system and the binary \texttt{FlipDyn} state. For finite cardinality of the physical state and arbitrary takeover and stage costs, we obtain the exact value and corresponding policies for the game. 

\textbf{3. NE for linear dynamical systems with quadratic costs:} We derive the NE for a linear dynamical system with takeover and stage costs that are quadratic in the state. For scalar systems, we show that the solutions are a function of only the parameters of the system dynamics and costs. For higher dimensional systems, we provide an approximate solution for the value of the \texttt{FlipDyn} game. We illustrate our findings through two numerical examples.  
\medskip

The paper is organized as follows. In Section~\ref{sec:Problem_Formulation}, we formally define the \texttt{FlipDyn} game. We provide a solution methodology for any general system with arbitrary state and takeover costs in Section~\ref{sec:FlipDyn_General_Systems}. We present the analysis for linear systems with quadratic costs in Section~\ref{sec:FlipDyn_Linear_Systems}. In Section~\ref{sec:Numerics}, we illustrate the efficacy of the solution applied to a linear-time invariant system. We conclude this paper and provide future directions in Section~\ref{sec:Conclusion}.

\section{Problem Formulation}\label{sec:Problem_Formulation}
Consider a discrete-time dynamical system governed by 
\begin{align}\label{eq:dynamics}
	x_{k+1} = F_{k}^{0}(x_k,u_k),
\end{align}
where $k \in \mathbb{N}$ denotes the discrete time instant, $x_{k} \in \mathbb{R}^{n}$ and  $u_{k} \in \mathbb{R}^m$ are the state and control input of the system, respectively, $F_{k}^{0}: \mathbb{R}^{n} \times \mathbb{R}^{m} \rightarrow \mathbb{R}^{n}$ is the state transition function. We restrict our attention to a single adversary trying to gain control of the dynamical system resource~\eqref{eq:dynamics}. In particular, we assume the adversary to be located between the controller and actuator, illustrated in Figure~\ref{fig:FlipIt_control_dynamics}.
The inclusion of an adversary modifies~\eqref{eq:dynamics} resulting into 
\begin{align*}
	x_{k+1} = (1 - \alpha_{k})F_{k}^{0}(x_{k},u_{k}) + \alpha_{k}F_{k}^{1}(x_{k},w_{k}),
\end{align*}
where $F_{k}^{1}: \mathbb{R}^{n} \times \mathbb{R}^{p} \rightarrow \mathbb{R}^{n}$ is the state transition function under the adversary's control, $w_k \in \mathbb{R}^{p}$ represents the attack signal, and $\alpha_{k} \in \{0,1\}$ denotes a takeover of the control signal by either the adversary ($\alpha_{k} = 1$) or the defender ($\alpha_{k} = 0$), termed as the \texttt{FlipDyn} state. A takeover is  mutually exclusive, i.e., only one player is in control of the system at any given time.

The control law for each player are pre-designed with different objectives -- a defender's objective may be to steer the state towards an equilibrium point. In contrast the adversary upon gaining access into the system, implements an attack policy to ensure maximum divergence of the state from the corresponding equilibrium point, while keeping the state within any defined set. In particular, we assume that 
\[
u_k = K_k(x), \quad w_k = W_k(x),
\]
where $K_k$ and $W_k$ are specified state feedback control laws. These lead to the following closed-loop evolution 
\begin{align}\label{eq:dyn_sys_compact}
	x_{k+1} = (1 - \alpha_{k})f_{k}^{0}(x_{k}) + \alpha_{k}f_{k}^{1}(x_{k}),
\end{align}
where $f_{k}^{0}(x_{k}) := F_k^0(x_k, K_k(x))$ and $f_{k}^{1}(x_{k}) := F_k^1(x_k, W_k(x))$.

\medskip

To describe a takeover mathematically, the action $\alpha^{j}_k \in \{0,1\}$ denotes the $k$th move of the player $j \in \{0,1\}$, with $j = 0$ denoting the defender, and $j=1$ as the adversary. The dynamics of this binary \texttt{FlipDyn} state based on the player's move satisfies
\begin{align}\label{eq:flip_state}
	\alpha_{k} &= \begin{cases}
		\alpha_{k-1}, & \text{if } \alpha^{1}_{k} = \alpha^{0}_{k}, \\
		j, & \text{if } \alpha^{j}_{k} = 1.
	\end{cases}
\end{align}
Equation~\eqref{eq:flip_state} states that if both players act to obtain control of the resource at the same time, then their actions get nullified and \texttt{FlipDyn} state remains unchanged. However, if the resource is in control of one of the players and the other player moves to gain control at time $k+1$ while the first player does not exert control, then the \texttt{FlipDyn} state toggles. Finally, if a player is already in control and decides to move while the other player remains inactive, then the \texttt{FlipDyn} state is unchanged. 

A sample instance over a finite time period is demonstrated in Figure~\ref{fig:FlipDyn_sample}, where the defender has an initial access at time $k=0$, followed by a takeover action at time $k=1$ and $3$. The adversary gains access at time instant $k=5$ under a no defense action, and remains in control till time instant $k=9$, when the defender takes back control. Additionally, notice at time instant $k=7$, both the adversary and defender move to takeover, but their actions are cancelled out and therefore, the \texttt{FlipDyn} state does not change, i.e, the adversary maintains control. 

The joint state dynamics as a function of the binary \texttt{FlipDyn} state and the \texttt{FlipDyn} dynamics are described by \eqref{eq:dyn_sys_compact} and 
\begin{align}
	\label{eq:FlipDyn_compact}
	\alpha_{k+1} &= \left((1 - \alpha^{0}_k)(1 - \alpha^{1}_k) +  \alpha^{0}_k\alpha^{1}_k\right)\alpha_{k} + (1- \alpha^{0}_k)(\alpha^{0}_k + \alpha^{1}_k).
\end{align}
We pose the decision to control the resource as a zero-sum dynamic game described by the dynamics~\eqref{eq:dyn_sys_compact} and~\eqref{eq:FlipDyn_compact} over a finite time horizon of $L$, where the defender aims to minimize a net cost given by, 
\begin{equation}\label{eq:obj_def_alpha}
	\begin{aligned}
		J(x_{0}, \alpha_0, \{\alpha^1_{t}\}, \{\alpha^0_{t}\}) = \sum_{t=1}^{L} g(x_t) + (1 - \alpha_t)d(x_t) - \alpha_t a(x_t), \\
	\end{aligned}
\end{equation}
where $g(x_t): \mathbb{R}^{n} \rightarrow \mathbb{R}$ represents the state regulation cost, $d(x_t)$  and $a(x_t)$ are the instantaneous takeover costs for the defender and adversary, respectively. The notation $\{\alpha_t^j\} := \{\alpha_1^j, \dots, \alpha_L^j\}$. In contrast, the adversary aims to maximize the cost function~\eqref{eq:obj_def_alpha} leading to a zero-sum dynamic game, defining our \emph{\texttt{FlipDyn} game}.



\smallskip

We seek to find the NE of the game defined by \eqref{eq:obj_def_alpha}. However, a pure NE may not be guaranteed. For instance, a one-step horizon problem results into solving a $2\times 2$ matrix game, which need not admit a pure NE. To guarantee existence of NE, we will expand the set of player policies to behavioral policies --  probability distributions over the space of discrete actions at each time step~\cite{hespanha2017noncooperative}. 
Specifically, let,
\begin{equation}
	y_{k} = \begin{bmatrix}
		\beta_{k} & 1 - \beta_{k}
	\end{bmatrix}^{T}, \quad z_{k} = \begin{bmatrix}
	\gamma_{k} & 1 - \gamma_{k}
\end{bmatrix}^{T}
\end{equation}
be a behavioral policy for the defender and adversary at time instant $k$, such that $\beta_{k} \in [0,1], \gamma_{k} \in [0,1]$. Thus, $y_{k}, z_{k}  \in \Delta_{2}$, where $\Delta_{2}$ is the probability simplex in two dimensions. The cost~\eqref{eq:obj_def_alpha} then needs to be considered in expectation over the horizon.
Over the finite horizon $L$, let $y_{\mathbf{L}} = \{y_{1}, y_{2}, \dots, y_{L}\} \in \Delta^{L}_{2}$ and $z_{\mathbf{L}} = \{z_{1}, z_{2}, \dots, z_{L}\} \in \Delta^{L}_{2}$ be the sequence of defender and adversary behavioral policies. Then, the expected outcome of the \texttt{FlipDyn} game over the defined finite horizon $L$ is 
\begin{equation}
    \label{eq:opti_E_cost}
	J_{E}(x_0, \alpha_0, y_{\mathbf{L}}, z_{\mathbf{L}}) :=  \mathbb{E}_{y_{\mathbf{L}}, z_{\mathbf{L}}}[J( x_0, \alpha_0, \{\alpha^1_{t}\}, \{\alpha^0_{t}\})],
\end{equation}
where the expectation is computed using the distributions $y_{\mathbf{L}}$ and $z_{\mathbf{L}}$. Specifically, we seek a saddle-point solution ($y_{\mathbf{L}}^{*}, z_{\mathbf{L}}^{*}$) in the space of behavioral policies such that for all allowable $\forall (x_0, \alpha_0)$,
\begin{equation*}
	J_{E}(x_0, \alpha_0, y_{\mathbf{L}}^{*}, z_{\mathbf{L}}) \leq J_E(x_0, \alpha_0, y_{\mathbf{L}}^{*}, z_{\mathbf{L}}^{*}) \leq  J_{E}(x_0, \alpha_0, y_{\mathbf{L}}. z_{\mathbf{L}}^{*}), 
\end{equation*}
Together, the \texttt{FlipDyn} game is completely defined by the cost in \eqref{eq:opti_E_cost} subject to the dynamics in \eqref{eq:dyn_sys_compact} and \eqref{eq:FlipDyn_compact}.

\section{\texttt{FlipDyn} Control for general systems}\label{sec:FlipDyn_General_Systems}
In this section, we first compute NE for the \texttt{FlipDyn} game. We begin by defining the value function for the \texttt{FlipDyn} game.
\subsection{Value function}
Our approach is to define a separate value function in each of the two \texttt{FlipDyn} states. Let $V^{0}_k(x)$ and $V^{1}_k(x)$ be the two value functions in state $x$ at time instant $k$ corresponding to the \texttt{FlipDyn} state of $\alpha = 0$ and $1$, respectively. Then for $\alpha=0$, we have 
\begin{equation}
    \label{eq:V_k^0_cost_to_go}
	V^{0}_k(x) = g_k(x) + y_{k}^{T}\Xi^{0}_{k}z_{k}, 
\end{equation}
where $\Xi^{0}_{k} \in \mathbb{R}^{2 \times 2}$ is the cost-to-go matrix, and the actions of the defender (row player) and adversary (column player) applied on $\Xi^{0}_{k}$ returns the value corresponding to the state at time $k+1$. This instantaneous payoff matrix has the form  
\begin{equation}\label{eq:Cost_to_go}
\Xi_{k}^{0} = \begin{bmatrix}
		V_{k+1}^0(f^0_k(x)) & V_{k+1}^1(f_k^1(x)) - a(x) \\ 
		V_{k+1}^0(f^0_k(x)) + d(x) & V_{k+1}^0(f^0_k(x)) + d(x) - a(x)
	\end{bmatrix}.
\end{equation}
The matrix entries corresponding to $\Xi^{0}_{k}$ are determined by using the \texttt{FlipDyn} dynamics~\eqref{eq:dyn_sys_compact} and~\eqref{eq:FlipDyn_compact}. $\Xi_{k}^{0}(1,1)$ corresponds to both the defender and adversary staying idle, respectively. Similarly, $\Xi_{k}^{0}(2,2)$ corresponds to the action of takeover by both the defender and adversary. The off-diagonal entries are due to exactly one player taking control. We observe that the actions of the defender and adversary couple the value functions in each \texttt{FlipDyn} state $V^{0}_k$ and $V^{1}_k$. 

Following similar steps, the value function $V^{1}_{k}$ for the \texttt{FlipDyn} state $\alpha = 1$, and its corresponding cost-to-go matrix $\Xi_{k}^{1}$ is 
\begin{equation}
    \label{eq:V_k^1_cost_to_go}
	V^{1}_k(x) = g_k(x) + y_{k}^{T} \Xi_{k}^{1} z_{k},
\end{equation}
\begin{equation}
    \label{eq:zero_sum_cost_to_go_al1}
	\Xi_{k}^{1} =\begin{bmatrix}
		V_{k+1}^1\left(f^1_k(x)\right) & V_{k+1}^1\left(f_k^1(x)\right) - a(x) \\ 
		V_{k+1}^0\left(f^0_k(x)\right) + d(x) & V_{k+1}^1\left(f^1_k(x)\right) + d(x) - a(x)
	\end{bmatrix}.
\end{equation}
From the previous observation, both value functions $V^{0}_{k}$ and $V^{1}_{k}$ are coupled through the cost-to-go matrices $\Xi_{k}^{0}$ and $\Xi_{k}^{1}$.

\subsection{Expected Value of the \texttt{FlipDyn} game}
In each \texttt{FlipDyn} state ($\alpha = \{0,1\}$), the corresponding cost-to-go matrix defines a one-step zero-sum game with the defender aiming to minimize the value function, and the adversary trying to maximize the same. When a row or column domination~\cite{hespanha2017noncooperative} exists, it leads to a pure policy for at least one player. However, we first show that this game does not admit dominated policies in the following result.

\medskip
\begin{lemma}
	\label{lm:mixed_policy}
	For any $k \in \mathbb{N}$, there is no pure policy equilibrium for the one-step zero-sum games defined by the matrices $\Xi^{0}_{k}$ and $\Xi^{1}_{k}$.
\end{lemma}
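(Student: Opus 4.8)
The plan is to read each cost-to-go matrix $\Xi^0_k$ and $\Xi^1_k$ as a two-player zero-sum matrix game in which the defender is the row (minimizing) player and the adversary is the column (maximizing) player, and to rule out a pure-policy equilibrium by showing that no entry is simultaneously the minimum of its column and the maximum of its row, i.e.\ no pure saddle point exists. For these particular $2 \times 2$ matrices this is equivalent to the absence of row and column domination, so I would argue both notions together: were some cell a pure equilibrium, one player would hold a weakly dominant pure policy, and I would exhibit a unilateral deviation that strictly improves that player's outcome.

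First I would use the positivity of the takeover costs $d(x) > 0$ and $a(x) > 0$ to discard two of the four candidate cells structurally. In $\Xi^0_k$ the entry $(2,1)$ exceeds $(1,1)$ by $d(x)$, so column $1$ attains its minimum in row $1$; hence the cell in which the defender takes over against an idle adversary can never be a column minimum and is eliminated. Likewise $(2,2)$ falls short of $(2,1)$ by $a(x)$, so row $2$ attains its maximum in column $1$, and the cell in which both players act can never be a row maximum. An analogous pair of eliminations applies to $\Xi^1_k$: positivity of $a(x)$ removes the cell in which the already-in-control adversary needlessly re-takes, and positivity of $d(x)$ removes the both-act cell.

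For both matrices this leaves exactly two surviving candidate cells, and in each case ruling them out reduces to the same pair of scalar comparisons, namely $V^1_{k+1}(f^1_k(x)) > V^0_{k+1}(f^0_k(x)) + a(x)$ and $V^1_{k+1}(f^1_k(x)) > V^0_{k+1}(f^0_k(x)) + d(x)$: the first says the adversary can strictly gain by seizing control, the second that the defender finds it worthwhile to spend $d(x)$ to reclaim it. I expect the crux of the proof to be establishing these two strict inequalities, equivalently $V^1_{k+1}(f^1_k(x)) - V^0_{k+1}(f^0_k(x)) > \max\{a(x), d(x)\}$ --- that the value gap opened by an adversarial takeover strictly exceeds each takeover cost. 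This is the step that genuinely invokes the modeling assumptions on the costs and value functions (non-triviality of the takeover together with $a,d > 0$), rather than the combinatorial structure of the matrix, and I would either impose it as a standing condition or propagate it inductively through the value recursion. Once these inequalities hold, no cell of either matrix is a saddle point and neither player has a dominant pure policy, so no pure-policy equilibrium exists --- the assertion of the lemma --- and the value of each one-step game is attained only in mixed policies.
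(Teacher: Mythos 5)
Your proposal is correct and takes essentially the same route as the paper: positivity of $d(x)$ and $a(x)$ eliminates the defend-against-idle and both-move cells, and ruling out the remaining candidate cells of both $\Xi^0_k$ and $\Xi^1_k$ reduces to exactly the paper's condition \eqref{eq:Lemma1}, namely $V^1_{k+1}(f^1_k(x)) - V^0_{k+1}(f^0_k(x)) > \max\{a(x),d(x)\}$. You are also right that this inequality is a modeling condition to be imposed rather than derived --- the paper asserts it ``must hold'' for the adversary to ever attack and later enforces it recursively (cf.\ the requirement $(F_k+B_kW_k)^2\mathbf{p}^1_{k+1} \geq (F_k-B_kK_k)^2\mathbf{p}^0_{k+1}+\max\{d,a\}$ in Theorem~\ref{th:Value_of_game_scalar} and the terminal condition \eqref{eq:V_SC_terminal}), so your explicit four-cell saddle-point check, including the separate treatment of $\Xi^1_k$ that the paper dismisses as symmetric, is if anything slightly more complete than the paper's argument.
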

\begin{proof}
	We prove the claim only for $\Xi^{0}_{k}$ since the conclusion for $\Xi^{1}_{k}$ is symmetric. Based on the entries in the first column of $\Xi_{k}^{0}$, we observe,
	\[ V_{k+1}^{0}(f_k^{0}(x)) \leq V_{k+1}^{0}(f_k^{0}(x)) + d(x),\]
	indicating that under a pure policy of staying idle, i.e., $z_{k} = \begin{bmatrix}
		1 & 0
	\end{bmatrix}^{T}$ by the adversary, the defender will prefer not to move, i.e., $y_{k} = \begin{bmatrix}
	1 & 0
	\end{bmatrix}^{T}$. Similarly, from the entry $\Xi^0_k(1,2)$, we infer that for an adversary to takeover the system, the conditions
	\[ V_{k}^{1}(f_k^{1}(x)) > V_{k}^{0}(f_k^{0}(x)) + d(x), \text{ and}\]
	\[ V_{k}^{1}(f_k^{1}(x)) > V_{k}^{0}(f_k^{0}(x)) + a(x)\]
	must hold, respectively. In conclusion, the condition of 
	\begin{equation}\label{eq:Lemma1}
	V_{k}^{1}(f_k^{1}(x)) > V_{k}^{0}(f_k^{0}(x)) + \max \{d(x),a(x)\},
	\end{equation}
	must hold. This proves that there cannot be any row or column domination in $\Xi_k^0$. Therefore, the defender and adversary will always mix between takeover (defend, attack) and staying idle. This completes the proof. 
\end{proof}
\medskip
 The analysis of Lemma~\ref{lm:mixed_policy}, particularly \eqref{eq:Lemma1} provides a condition for a mixed policy NE of the one-step game. Using this condition, we recursively derive the (mixed) value at any time instant $k$ for each binary \texttt{FlipDyn} state as summarized in Theorem~\ref{th:Cost_to_go_zerosum_value}. 
\begin{theorem}
    \label{th:Cost_to_go_zerosum_value}
    Given the cost-to-go matrices~\eqref{eq:Cost_to_go} and~\eqref{eq:zero_sum_cost_to_go_al1} for $\alpha_k = 0$ and $1$, respectively, the value of the state $x$ at time $k$ satisfies,
	\begin{equation}
		\begin{split}
			\label{eq:zero_sum_matrix_al0}
			V_{k}^{0}(x) &= g(x) + d(x) + V_{k+1}^{0}(f_k^{0}(x)) - \frac{d(x)a(x)}{\tilde{V}_{k+1}(x)} 
		\end{split},
	\end{equation}
	\begin{equation}
		\begin{split}
			\label{eq:zero_sum_matrix_al1}
			V_{k}^{1}(x) &= g(x) - a(x) + V_{k+1}^{1}(f_k^{0}(x)) + \frac{d(x)a(x)}{\tilde{V}_{k+1}(x)} 
		\end{split},
	\end{equation}
	where $\tilde{V}_{k+1}(x) := V_{k+1}^{1}(f_k^{1}(x)) - V_{k+1}^{0}(f_k^{0}(x))$. \frqed
\end{theorem}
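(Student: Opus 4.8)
The plan is to recognize that, for each fixed \texttt{FlipDyn} state, evaluating $V_k^0$ and $V_k^1$ amounts to solving a single $2\times 2$ zero-sum matrix game with payoff $\Xi_k^0$ (resp.\ $\Xi_k^1$), and then to invoke the closed-form value of such a game. Concretely, I would abbreviate the repeated entries by setting $A := V_{k+1}^0(f_k^0(x))$, $B := V_{k+1}^1(f_k^1(x))$, $d := d(x)$ and $a := a(x)$, so that $\tilde V_{k+1}(x) = B - A$ and
\begin{equation*}
\Xi_k^0 = \begin{bmatrix} A & B - a \\ A + d & A + d - a \end{bmatrix}.
\end{equation*}

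Next I would justify using the mixed-strategy value formula. Lemma~\ref{lm:mixed_policy} already establishes that neither game admits a pure NE, equivalently that~\eqref{eq:Lemma1} holds; under this condition the $2\times 2$ game has an interior saddle point whose value is the standard expression
\begin{equation*}
v = \frac{m_{11}m_{22} - m_{12}m_{21}}{m_{11} + m_{22} - m_{12} - m_{21}}.
\end{equation*}
Substituting the entries of $\Xi_k^0$, the denominator collapses to $A - B = -\tilde V_{k+1}(x)$, and the crucial step is to factor the numerator as $m_{11}m_{22} - m_{12}m_{21} = (A - B)(A + d) + ad$. Dividing yields $v = (A+d) - ad/\tilde V_{k+1}(x)$, and adding the stage cost $g(x)$ reproduces~\eqref{eq:zero_sum_matrix_al0}.

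For $\Xi_k^1$ I would repeat the computation; the denominator reduces to $B - A = \tilde V_{k+1}(x)$, the numerator factors as $(B-A)(B-a) + ad$, and so $v = (B - a) + ad/\tilde V_{k+1}(x)$, giving~\eqref{eq:zero_sum_matrix_al1} after adding $g(x)$. To be fully rigorous I would also record the optimal mixing probabilities -- for $\Xi_k^0$ the defender plays row one with probability $\beta = a/\tilde V_{k+1}(x)$ and the adversary plays column one with probability $\gamma = 1 - d/\tilde V_{k+1}(x)$ -- and verify they lie in $(0,1)$, which follows at once from $\tilde V_{k+1}(x) > \max\{d,a\}$ guaranteed by~\eqref{eq:Lemma1}.

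The computation is elementary once framed this way, so there is no deep obstacle; the only real care needed is the algebraic factorization of the numerator, since a direct expansion produces several cross terms ($BA$, $Bd$, $aA$) that must cancel correctly before the product $ad$ and the factor $(A-B)$ separate out. I would double-check these cancellations, and also confirm that the value formula is applied with the correct min/max orientation (defender as the minimizing row player), since a sign error there would flip $\tilde V_{k+1}(x)$ and corrupt both closed forms.
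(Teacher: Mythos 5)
Your proof is correct and takes essentially the same route as the paper's: both invoke the closed-form value $\frac{m_{1}m_{4}-m_{2}m_{3}}{m_{1}-m_{2}+m_{4}-m_{3}}$ of a $2\times 2$ zero-sum game with no row or column domination (justified by Lemma~\ref{lm:mixed_policy}) and substitute the entries of $\Xi_k^0$ and $\Xi_k^1$, with you simply carrying out explicitly the cancellations the paper leaves implicit and additionally recording the interior mixing probabilities, which match~\eqref{eq:zero_sum_policy}. As a minor point in your favor, your computation correctly produces $V_{k+1}^{1}\left(f_k^{1}(x)\right)$ in the second recursion, confirming that the $f_k^{0}$ printed in~\eqref{eq:zero_sum_matrix_al1} is a typographical slip (compare~\eqref{eq:V_al1_QC}).
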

\begin{proof}
    Given any zero-sum game matrix
	\begin{equation*}
		M = \begin{bmatrix}
			m_{1} & m_{2} \\
			m_{3} & m_{4}
		\end{bmatrix}
	\end{equation*}
	that does not admit any row or column domination, the unique mixed policy of the row and column player and the value of the game (see e.g.,~\cite{banik2020secure}) are given by
	\begin{equation}
		\label{eq:zero_sum_policy}
		\begin{split}
			\pi^{*}_\text{row} = \begin{bmatrix}
				\frac{m_{4} - m_{3}}{m_{1} - m_{2} + m_{4} - m{3}} \\[2ex]
				\frac{m_{1} - m_{2}}{m_{1} - m_{2} + m_{4} - m{3}} 
			\end{bmatrix},
			\pi^{*}_\text{col} = \begin{bmatrix}
				\frac{m_{4} - m_{2}}{m_{1} - m_{2} + m_{4} - m{3}} \\[2ex]
				\frac{m_{1} - m_{3}}{m_{1} - m_{2} + m_{4} - m{3}} 
			\end{bmatrix},
		\end{split}
	\end{equation} 
	\begin{equation}
		\label{eq:zero_sum_game}
		\text{Value of M} = {\pi^{*}}^T_\text{row}M\pi^{*}_\text{col} = \frac{m_{1}m_{4} - m_{2}m_{3}}{m_{1} - m_{2} + m_{4} - m{3}}.
	\end{equation}
	Substituting the entries of the cost-to-go matrices from~\eqref{eq:Cost_to_go} and~\eqref{eq:zero_sum_cost_to_go_al1} in~\eqref{eq:zero_sum_game}, and using the expressions in~\eqref{eq:V_k^0_cost_to_go} and~\eqref{eq:V_k^1_cost_to_go}, we obtain~\eqref{eq:zero_sum_matrix_al0} and~\eqref{eq:zero_sum_matrix_al1} for $\alpha = 0$ and $1$, respectively.
\end{proof}
\medskip

Theorem 1 states that for a finite cardinality of the state $x$ and over a finite horizon $L$, we obtain the exact value of the state and the saddle point of the \texttt{FlipDyn} game. However, the computational and storage complexity of the recursive approach will scale undesirably for continuous state spaces. For this purpose, we will provide a parametric form of the value function for the case of linear dynamics with quadratic costs in the next section.

\section{\texttt{FlipDyn} Control for LQ Problems}\label{sec:FlipDyn_Linear_Systems}
For linear dynamics and quadratic costs, we divide our analysis into two cases, a scalar system ($1$-dimensional) and an $n$-dimensional system. Under these assumptions, the \texttt{FlipDyn} setup~\eqref{eq:dyn_sys_compact} reduces to
\begin{align}\label{eq:linear_dynamics}
	x_{k+1} = F_{k}x_{k} + (1 - \alpha_{k})B_{k}u_{k} + \alpha_{k}E_{k}w_{k},
\end{align}
where $F_{k} \in \mathbb{R}^{n \times n}$ is the state transition matrix, $B_{k} \in \mathbb{R}^{n \times m}$ is the control matrix, and $E_{k} \in \mathbb{R}^{n\times p}$ is the attack matrix. 


It has been shown in~\cite{kwakernaak1972linear} that the optimal control law for any linear time system is achieved using state-feedback information (e.g.,~\cite{blanchini1990feedback,kaminer1993mixed}). Therefore, in this work, we will assume a state-feedback controller for both players. Particularly, we assume that
\begin{align}
    \label{eq:state_feedback_def_adv}
	u_k = -K_{k}x_k, \qquad w_k = W_{k}x_k,
\end{align}
where $K_{k} \in \mathbb{R}^{m\times n}, W_{k} \in\mathbb{R}^{p\times n}$ are possibly time varying matrices denoting the defender's and adversary's control gains, respectively. 
We will now simplify the recursive equations~\eqref{eq:zero_sum_matrix_al0} and~\eqref{eq:zero_sum_matrix_al1} under the following assumed costs.
\begin{assumption}[Quadratic state-dependent costs]\label{ast:N_D_sys_quad_costs}
The stage and takeover costs for each player satisfy
\begin{equation}
    \label{eq:ST_MV_cost_Q}
    \begin{aligned}
        g(x) = x^{T}Qx, \quad d(x) = x^{T}Dx, \quad a(x) = x^{T}Ax,
    \end{aligned}
\end{equation}
where $Q, D$ and $A$ are given positive definite matrices.
\end{assumption}
Under Assumption~\ref{ast:N_D_sys_quad_costs}, the recursions in~\eqref{eq:zero_sum_matrix_al0} and~\eqref{eq:zero_sum_matrix_al1} yield 
\begin{equation}
	\begin{split}
		\label{eq:V_al0_QC}
		V_{k}^{0}(x) &= x^{T}(Q + D)x +  V_{k+1}^{0}(f_k^{0}(x)) - \frac{x^{T}Dxx^{T}Ax}{\widetilde{V}_{k+1}(x)} 
	\end{split}
\end{equation}
\begin{equation}
	\begin{split}
		\label{eq:V_al1_QC}
		V_{k}^{1}(x) &= x^{T}(Q - A)x +  V_{k+1}^{1}(f_k^{1}(x)) + \frac{x^{T}Dxx^{T}Ax}{\widetilde{V}_{k+1}(x)} 
	\end{split},
\end{equation}
where $\tilde{V}_{k+1}(x)$ has been defined in the statement of Theorem~\ref{th:Cost_to_go_zerosum_value}.

Assuming a parametric form for the value function corresponding to $\alpha = 0$ and $1$ as,
\begin{align*}
    \label{eq:para_form_al_QC}
    V_{k}^{0}(x) = x^{T}P^{0}_kx, \\
    V_{k}^{1}(x) = x^{T}P^{1}_kx,
\end{align*}
where $P^{0}_{k}$ and $P^{1}_{k}$ are positive semi-definite matrices corresponding to the \texttt{FlipDyn} states $\alpha = 0$ and $1$, respectively. Therefore, the value function~\eqref{eq:V_al0_QC} and~\eqref{eq:V_al1_QC} under this parametric form satisfy
\begin{equation}
	\begin{split}
		\label{eq:P_V_al0_QC}
		V_{k}^{0}(x) = x^{T}(Q + D + \widetilde{B}_{k}^{T}P^{0}_{k+1}\widetilde{B}_{k})x -  \frac{x^{T}Dxx^{T}Ax}{x^{T}\widetilde{P}_{k+1}x},
	\end{split}
\end{equation}
\begin{equation}
	\begin{split}
		\label{eq:P_V_al1_QC}
		V_{k}^{1}(x) &= x^{T}(Q - A + \widetilde{W}_{k}^{T}P^{1}_{k+1}\widetilde{W}_{k})x +  \frac{x^{T}Dxx^{T}Ax}{x^{T}\widetilde{P}_{k+1}x} 
	\end{split},
\end{equation}
where $\widetilde{W}_{k} := (F_{k} + B_{k}W_{k}), \widetilde{B}_{k} := (F_{k} - B_{k}K_{k})$ and $\widetilde{P}_{k+1} := \widetilde{W}_{k}^{T}P^{1}_{k+1}\widetilde{W}_{k} - \widetilde{B}_{k}^{T}P^{0}_{k+1}\widetilde{B}_{k}$.
This quadratic form yields the following expressions for the mixed policies of each player at time $k$ as summarized in the following result.

\begin{cor}
\label{cor:policy_general_QC}
For the linear dynamics~\eqref{eq:linear_dynamics} and affine controls~\eqref{eq:state_feedback_def_adv}, under Assumption~\ref{ast:N_D_sys_quad_costs} the players' policies satisfy
\begin{align}
		\label{eq:NE_policy_alpha0_al0_VC_QC}
		\begin{split}
			y^{*}_{k| \alpha_k = 0}(x)  = \begin{bmatrix} \hat{\beta}_{k}^{*}(x) & 1 - \hat{\beta}_{k}^{*}(x)
			\end{bmatrix}^{T},
		\end{split}
	\end{align}
	\begin{align}
		\label{eq:NE_policy_alpha1_al0_VC_QC}
		\begin{split}
			z^{*}_{k | \alpha_k = 0}(x)  = \begin{bmatrix} \hat{\gamma}_{k}^{*}(x) & 1 - \hat{\gamma}_{k}^{*}(x)
			\end{bmatrix}^{T},
		\end{split}
	\end{align}
    \begin{align}
		\label{eq:NE_policy_alpha0_al1_VC_QC}
		\begin{split}
			y^{*}_{k| \alpha_k = 1}(x)  = \begin{bmatrix} 1 - \hat{\beta}_{k}^{*}(x) & \hat{\beta}_{k}^{*}(x)
			\end{bmatrix}^{T},
		\end{split}
	\end{align}
	\begin{align}
		\label{eq:NE_policy_alpha1_al1_VC_QC}
		\begin{split}
			z^{*}_{k | \alpha_k = 1}(x)  = \begin{bmatrix} 1 - \hat{\gamma}_{k}^{*}(x) & \hat{\gamma}_{k}^{*}(x) 
			\end{bmatrix}^{T},
		\end{split}
	\end{align}
	where,
	\begin{align*}
	    \hat{\beta}_{k}^{*} = \frac{x^{T}Ax}{x^{T}\widetilde{P}_{k+1}x}, \quad  1 - \hat{\gamma}_{k}^{*} = \frac{x^{T}Dx}{x^{T}\widetilde{P}_{k+1}x}.
	\end{align*}
The terms $y^{*}_{k|\alpha_k}$ and $z^{*}_{k|\alpha_k}$ correspond to the defender's and adversary's policy for the \texttt{FlipDyn} state $\alpha_{k}$ at time $k$, respectively. \frqed
\end{cor}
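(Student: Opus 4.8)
The plan is to apply directly the explicit mixed-strategy formula~\eqref{eq:zero_sum_policy} for a $2\times 2$ zero-sum game, established in the proof of Theorem~\ref{th:Cost_to_go_zerosum_value}, to each of the two cost-to-go matrices $\Xi_k^0$ and $\Xi_k^1$, and then to specialize the resulting expressions to the quadratic-cost setting of Assumption~\ref{ast:N_D_sys_quad_costs}. Since Lemma~\ref{lm:mixed_policy} already guarantees that neither matrix admits row or column domination, the formula~\eqref{eq:zero_sum_policy} applies verbatim and no separate existence argument is required; the corollary is therefore a matter of substitution and simplification.

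First I would treat the case $\alpha_k = 0$. Reading the entries of $\Xi_k^0$ from~\eqref{eq:Cost_to_go} as $m_1,m_2,m_3,m_4$ in row-major order, I would form the three quantities appearing in~\eqref{eq:zero_sum_policy}, namely $m_1 - m_2$, $m_4 - m_3$, and their sum. A short calculation gives $m_4 - m_3 = -a(x)$ and $m_1 - m_2 + m_4 - m_3 = -\tilde{V}_{k+1}(x)$, where $\tilde{V}_{k+1}(x)$ is precisely the quantity defined in the statement of Theorem~\ref{th:Cost_to_go_zerosum_value}. Substituting into the row-player (defender) expression produces a first component $a(x)/\tilde{V}_{k+1}(x)$, and into the column-player (adversary) expression a second component $d(x)/\tilde{V}_{k+1}(x)$. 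Invoking Assumption~\ref{ast:N_D_sys_quad_costs} to write $a(x) = x^{T}Ax$ and $d(x) = x^{T}Dx$, together with the parametric substitution $\tilde{V}_{k+1}(x) = x^{T}\widetilde{P}_{k+1}x$ read off from~\eqref{eq:P_V_al0_QC}, yields exactly the stated $\hat{\beta}_k^{*} = x^{T}Ax / (x^{T}\widetilde{P}_{k+1}x)$ and $1 - \hat{\gamma}_k^{*} = x^{T}Dx / (x^{T}\widetilde{P}_{k+1}x)$, confirming~\eqref{eq:NE_policy_alpha0_al0_VC_QC} and~\eqref{eq:NE_policy_alpha1_al0_VC_QC}.

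Next I would repeat the computation for $\alpha_k = 1$ using $\Xi_k^1$ from~\eqref{eq:zero_sum_cost_to_go_al1}. The normalizing denominator now comes out to $+\tilde{V}_{k+1}(x)$ rather than $-\tilde{V}_{k+1}(x)$, and the altered diagonal causes the idle/takeover roles to interchange: the defender's first component becomes $1-\hat{\beta}_k^{*}$ and the adversary's first component becomes $1-\hat{\gamma}_k^{*}$, which is exactly the component swap recorded in~\eqref{eq:NE_policy_alpha0_al1_VC_QC} and~\eqref{eq:NE_policy_alpha1_al1_VC_QC}. The only point demanding care is consistent bookkeeping of the signs and of the component ordering fixed by $y_k = [\beta_k,\,1-\beta_k]^{T}$ and $z_k = [\gamma_k,\,1-\gamma_k]^{T}$; aside from this, the argument is routine algebra, so I do not anticipate any genuine obstacle beyond tracking these conventions correctly across the two \texttt{FlipDyn} states.
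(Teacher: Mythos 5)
Your proposal is correct and follows exactly the route the paper intends: applying the mixed-policy formula~\eqref{eq:zero_sum_policy} from the proof of Theorem~\ref{th:Cost_to_go_zerosum_value} to $\Xi_k^0$ and $\Xi_k^1$, then substituting the quadratic costs of Assumption~\ref{ast:N_D_sys_quad_costs} and the parameterization $\tilde{V}_{k+1}(x) = x^{T}\widetilde{P}_{k+1}x$. Your intermediate computations (e.g., $m_4 - m_3 = -a(x)$ and denominator $-\tilde{V}_{k+1}(x)$ for $\alpha_k = 0$, denominator $+\tilde{V}_{k+1}(x)$ with swapped components for $\alpha_k = 1$) all check out against the matrix entries in~\eqref{eq:Cost_to_go} and~\eqref{eq:zero_sum_cost_to_go_al1}.
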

 Substituting $\beta_{k}^{*}$ from~\eqref{eq:NE_policy_alpha0_al0_VC_QC} in~\eqref{eq:P_V_al0_QC}, and $1 - \gamma_{k}^{*}$ from~\eqref{eq:NE_policy_alpha1_al1_VC_QC} in~\eqref{eq:P_V_al1_QC},  we obtain the following form, 
\begin{equation}
	\begin{split}
		\label{eq:P_V_al0_QC_pol_ver}
		V_{k}^{0}(x) = x^{T}(Q + D + \widetilde{B}^{T}P^{0}_{k+1}\widetilde{B})x - x^{T}Dx (\hat{\beta}^{*}_{k}(x)),
	\end{split}
\end{equation}
\begin{equation}
	\begin{split}
		\label{eq:P_V_al1_QC_pol_ver}
		V_{k}^{1}(x) = x^{T}(Q - A + \widetilde{W}^{T}P^{1}_{k+1}\widetilde{W})x + x^{T}Ax (1-\hat{\gamma}^{*}_{k}(x)).
	\end{split}
\end{equation}
 We observe that both~\eqref{eq:P_V_al0_QC_pol_ver} and~\eqref{eq:P_V_al1_QC_pol_ver} are nonlinear in $x$. Therefore, a quadratic parameterization cannot necessarily represent the value function with quadratic costs. However, we show that for a scalar system ($1$-dimensional), this parameterization is sufficient. This parameterization also yields an approximation for the value functions in $n$-dimensional state spaces as detailed below.

\medskip

\subsubsection{\bf Scalar/1-dimensional system}
The state, defense and attack costs for a scalar system simplify to
\begin{equation}
\label{eq:cost_form_1D}
    g(x) = g x^{2}, \quad d(x) = d x^{2}, \quad a(x) = a x^{2},
\end{equation}
where $g, d$ and $a$ are positive constants and $x \in \mathbb{R}$. The following result provides a closed-form expression for the NE of the \texttt{FlipDyn} game and the corresponding value of the state at time instant $k$.

\begin{theorem}
	\label{th:Value_of_game_scalar}
	The unique mixed Nash equilibrium at any time $k$ for the \texttt{FlipDyn}  state of $\alpha_k = 0$ for a scalar system with costs~\eqref{eq:cost_form_1D} and dynamics~\eqref{eq:linear_dynamics} is given by,
	
	\begin{align}
		\label{eq:NE_policy_alpha0_SC}
		\begin{split}
				y^{*}_{k| \alpha_k = 0}  = \begin{bmatrix} \dfrac{a}{\tilde{\mathbf{p}}_{k+1}} & \dfrac{\tilde{\mathbf{p}}_{k+1} - a}{\tilde{\mathbf{p}}_{k+1}}
				\end{bmatrix}^{T},
		\end{split}
	\end{align}
	\begin{align}
		\label{eq:NE_policy_alpha1_SC}
		\begin{split}
				 z^{*}_{k| \alpha_k = 0}  = \begin{bmatrix} \dfrac{\tilde{\mathbf{p}}_{k+1} - d}{\tilde{\mathbf{p}}_{k+1}} & \dfrac{d}{\tilde{\mathbf{p}}_{k+1}}
				\end{bmatrix}^{T}
		\end{split}.
	\end{align}
	The saddle-point value at time instant $k$ is parameterized by, 
	\begin{align}
		\begin{split}
			\mathbf{p}_{k}^{0} = g + (F_{k} - B_{k}K_{k})^{2}\mathbf{p}^{0}_{k+1} + d -\frac{da}{\tilde{\mathbf{p}}_{k+1}}
		\end{split},
		\label{eq:NE_Val_SC}
	\end{align}
	where $\tilde{\mathbf{p}}_{k+1} := (F_{k} + B_{k}W_{k})^{2}\mathbf{p}^{1}_{k+1} - (F_{k} - B_{k}K_{k})^{2}\mathbf{p}^{0}_{k+1}$.
	
	Similarly, for the \texttt{FlipDyn} state of $\alpha_k = 1$, the unique Nash equilibrium at time $k$ is, 
	\begin{align}
		\label{eq:NE_policy_alpha0_al1_SC}
		\begin{split}
			y^{*}_{k|\alpha_k = 1} = \begin{bmatrix} \dfrac{\tilde{\mathbf{p}}_{k+1} - a}{\tilde{\mathbf{p}}_{k+1}} & \dfrac{a}{\tilde{\mathbf{p}}_{k+1}}  
			\end{bmatrix}^{T},
		\end{split}
	\end{align}
	\begin{align}
		\label{eq:NE_policy_alpha1_al1_SC}
		\begin{split}
			z^{*}_{k | \alpha_k = 1}  = \begin{bmatrix} \dfrac{d}{\tilde{\mathbf{p}}_{k+1}} & \dfrac{\tilde{\mathbf{p}}_{k+1} - d}{\tilde{\mathbf{p}}_{k+1}}  
			\end{bmatrix}^{T}.
		\end{split}
	\end{align}
	The saddle-point value at time $k$ is parameterized by, 
	\begin{align}
		\begin{split}
			\mathbf{p}_{k}^{1} = g + (F_{k} + B_{k}W_{k})^{2}\mathbf{p}^{1}_{k+1} - a +\frac{da}{\tilde{\mathbf{p}}_{k+1}},
		\end{split}
		\label{eq:NE_Val_SC_al1}
	\end{align}
	such that (recursively) $\mathbf{p}^{0}_k \geq 0 $ and $(F_{k} + B_{k}W_{k})^{2}\mathbf{p}^{1}_{k+1} \geq (F_{k} - B_{k}K_{k})^{2}\mathbf{p}^{0}_{k+1} + \max\{d,a\}$, hold $\forall k \in \mathbb{N}$. \frqed
\end{theorem}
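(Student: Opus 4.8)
The plan is to obtain the scalar result as a direct specialization of the LQ recursions and the mixed-policy formulas already established for the general quadratic case, and then to verify that the quadratic ansatz $V_k^0(x)=\mathbf{p}_k^0 x^2$, $V_k^1(x)=\mathbf{p}_k^1 x^2$ is self-consistent in one dimension. The starting point is~\eqref{eq:P_V_al0_QC} and~\eqref{eq:P_V_al1_QC}, together with the policies in Corollary~\ref{cor:policy_general_QC}, specialized by replacing the matrices $Q,D,A$ with the scalars $g,d,a$ and the feedback maps $\widetilde{B}_{k},\widetilde{W}_{k}$ with the scalars $F_{k}-B_{k}K_{k}$ and $F_{k}+B_{k}W_{k}$.

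The key observation is that in one dimension every quadratic form collapses to a multiple of $x^2$: $x^{T}Dx = dx^2$, $x^{T}Ax = ax^2$, and $x^{T}\widetilde{P}_{k+1}x = \tilde{\mathbf{p}}_{k+1}x^2$ with $\tilde{\mathbf{p}}_{k+1}=(F_{k}+B_{k}W_{k})^{2}\mathbf{p}_{k+1}^{1}-(F_{k}-B_{k}K_{k})^{2}\mathbf{p}_{k+1}^{0}$. Hence the coupling term $\frac{x^{T}Dx\,x^{T}Ax}{x^{T}\widetilde{P}_{k+1}x}$ in~\eqref{eq:P_V_al0_QC} reduces to $\frac{da}{\tilde{\mathbf{p}}_{k+1}}x^2$, i.e., it is again a pure quadratic, so no higher-order terms appear and the ansatz is exact. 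Reading off $\hat{\beta}_{k}^{*}=x^{T}Ax/x^{T}\widetilde{P}_{k+1}x$ and $1-\hat{\gamma}_{k}^{*}=x^{T}Dx/x^{T}\widetilde{P}_{k+1}x$ from Corollary~\ref{cor:policy_general_QC} gives the state-independent values $\hat{\beta}_{k}^{*}=a/\tilde{\mathbf{p}}_{k+1}$ and $1-\hat{\gamma}_{k}^{*}=d/\tilde{\mathbf{p}}_{k+1}$, which produce the policy vectors~\eqref{eq:NE_policy_alpha0_SC} and~\eqref{eq:NE_policy_alpha1_SC} and, for $\alpha_{k}=1$, the vectors~\eqref{eq:NE_policy_alpha0_al1_SC} and~\eqref{eq:NE_policy_alpha1_al1_SC} after the role swap dictated by the $\alpha=1$ payoff matrix~\eqref{eq:zero_sum_cost_to_go_al1}. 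Dividing~\eqref{eq:P_V_al0_QC} and~\eqref{eq:P_V_al1_QC} through by $x^2$ then yields exactly the scalar recursions~\eqref{eq:NE_Val_SC} and~\eqref{eq:NE_Val_SC_al1}.

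What remains is to confirm feasibility and uniqueness. Uniqueness follows from Lemma~\ref{lm:mixed_policy}: since neither one-step game admits a dominated row or column, the $2\times2$ saddle point is the unique interior mixed equilibrium given by~\eqref{eq:zero_sum_policy} and~\eqref{eq:zero_sum_game}. For the policies to be valid elements of $\Delta_{2}$, each entry must lie in $[0,1]$, which requires $\tilde{\mathbf{p}}_{k+1}\ge a$ and $\tilde{\mathbf{p}}_{k+1}\ge d$, i.e., $\tilde{\mathbf{p}}_{k+1}\ge\max\{d,a\}$; this is precisely the scalar form of the no-domination inequality~\eqref{eq:Lemma1}, obtained by substituting $V_{k+1}^{1}(f_{k}^{1}(x))=\mathbf{p}_{k+1}^{1}(F_{k}+B_{k}W_{k})^{2}x^2$ and $V_{k+1}^{0}(f_{k}^{0}(x))=\mathbf{p}_{k+1}^{0}(F_{k}-B_{k}K_{k})^{2}x^2$ and cancelling $x^2$. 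Together with $\mathbf{p}_{k}^{0}\ge0$, which keeps the value function positive semidefinite so that the quadratic parameterization persists, these are exactly the two recursive conditions stated in the theorem. I expect the main obstacle to be not any single computation but making the self-consistency argument airtight: one must verify by backward induction on $k$ that the two feasibility conditions propagate, since $\tilde{\mathbf{p}}_{k+1}$ and the sign of $\mathbf{p}_{k}^{0}$ depend on the values carried back from step $k+1$, and it is only the exact cancellation of $x^2$ --- which fails in higher dimensions --- that keeps the recursion closed within the quadratic class.
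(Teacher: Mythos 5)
Your proposal is correct and follows essentially the same route as the paper's own proof: both specialize the quadratic-parametric LQ expressions of Corollary~\ref{cor:policy_general_QC} to the scalar case, cancel the common factor $x^{2}$ to obtain state-independent policies and the recursions \eqref{eq:NE_Val_SC} and \eqref{eq:NE_Val_SC_al1}, and invoke the no-domination condition \eqref{eq:Lemma1} from Lemma~\ref{lm:mixed_policy} (in its scalar form $\tilde{\mathbf{p}}_{k+1}\geq\max\{d,a\}$) for validity and uniqueness of the mixed equilibrium, with the $\alpha_k=1$ case handled symmetrically via $\Xi_k^1$. Your added remark that the exact cancellation of $x^{2}$ is what closes the recursion within the quadratic class (and fails in higher dimensions) is a fair observation that matches the paper's subsequent motivation for the $n$-dimensional approximation, and your backward-induction framing of the feasibility conditions corresponds to the paper's terminal conditions $\mathbf{p}_L^0:=g$, $\mathbf{p}_L^1\geq g+\max\{a,d\}+\mu$.
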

\begin{proof}
	Suppose that at the terminal time instant $L$, 
	\[V_{L}^{0} = \mathbf{p}_{L}^{0}x^{2} := gx^{2}.\]
	Translating the condition~\eqref{eq:Lemma1} for mixed policy equilibrium gives the terminal condition,
	\begin{equation}
		\label{eq:V_SC_terminal}
		\mathbf{p}_{L}^{0} := g, \quad \mathbf{p}_{L}^{1} \geq g + \max\{a, d\} + \mu 
	\end{equation}
	where $\mu \geq 0$ is a constant. Substituting the costs in~\eqref{eq:cost_form_1D} into Corollary~\ref{cor:policy_general_QC}, we obtain the following policy for the defender (row player) and adversary (column player) and the value of the game, 
	\begin{align}
		\label{eq:Proof_policy_alpha0_SC}
		\begin{split}
			y^{*}_{k}(x) = \begin{bmatrix} \dfrac{ax^2}{\tilde{\mathbf{p}}_{k+1}x^2} & \dfrac{(\tilde{\mathbf{p}}_{k+1} - a)x^2}{\tilde{\mathbf{p}}_{k+1}x^2}
			\end{bmatrix},
		\end{split}
	\end{align}
	\begin{align}
		\label{eq:Proof_policy_alpha1_SC}
		\begin{split}
			z^{*}_{k}(x) = \begin{bmatrix} \dfrac{(\tilde{\mathbf{p}}_{k+1} - d)x^{2}}{\tilde{\mathbf{p}}_{k+1}x^{2}} & \dfrac{dx^{2}}{\tilde{\mathbf{p}}_{k+1}x^{2}}
			\end{bmatrix},
		\end{split}
	\end{align}
	\begin{align}
		\begin{split}
			\mathbf{p}_{k}^{0}x^{2} &= \left(g + (F_{k} - B_{k}K_{k})^{2}\mathbf{p}^{0}_{k+1} + d\right)x^{2} -\frac{dx^{2}ax^{2}}{\tilde{\mathbf{p}}_{k+1}x^{2}}
		\end{split},
		\label{eq:Proof_Val_SC}
	\end{align}
	where $\tilde{\mathbf{p}}_{k+1} := (F_{k} + B_{k}W_{k})^{2}\mathbf{p}^{1}_{k+1} - (F_{k} - B_{k}K_{k})^{2}\mathbf{p}^{0}_{k+1}$. Eliminating the state from~\eqref{eq:Proof_policy_alpha0_SC},~\eqref{eq:Proof_policy_alpha1_SC} and~\eqref{eq:Proof_Val_SC}, we obtain ~\eqref{eq:NE_policy_alpha0_SC},~\eqref{eq:NE_policy_alpha1_SC} and the recursion~\eqref{eq:NE_Val_SC}, respectively. We derive similar expressions for $\alpha_k = 1$ using the same procedure by replacing $\Xi_k^0$ by $\Xi_k^1$. 
\end{proof}
\medskip

Observe that the policy for the \texttt{FlipDyn} state $\alpha = 1$ is complementary to the policy corresponding to $\alpha = 0$ indicating the need to compute the policy for any one of the \texttt{FlipDyn} states. Using Theorem~\ref{th:Value_of_game_scalar}, the saddle points of the \texttt{FlipDyn} game for $\alpha = 0$ and $1$ are,
\begin{equation}
	J_E(x, 0, y_{\mathbf{L}}^{*}, z_{\mathbf{L}}^{*}) = x^{\mathrm{T}}\mathbf{p}^{0}_0x, \quad J_E(x, 1,y_{\mathbf{L}}^{*}, z_{\mathbf{L}}^{*}) = x^{\mathrm{T}}\mathbf{p}^{1}_0x.
\end{equation}
Thus, we have obtained an exact solution for the $1$-dimensional system with the parameterized value function and the player policies for both the \texttt{FlipDyn} states. Next, we extend this approach to derive an approximate solution for an $n$-dimensional system.

\medskip

\subsubsection{\bf $n$-dimensional system}
To address the nonlinearity of the value function, we first introduce an approximation that will enable recursive computation of the parameters defining the value function, thus making it independent of the state. 
\begin{theorem}
\label{cor:Val_QC_ND}
    At any time instant $k \in \mathbb{N}$, under Assumption 1, suppose that the nonlinear terms $x^{T}Dx \, \hat{\beta}^{*}_{k}(x)$ and $x^{T}Ax \, (1-\hat{\gamma}^{*}_{k}(x))$ 
    in~\eqref{eq:P_V_al0_QC_pol_ver} and in~\eqref{eq:P_V_al1_QC_pol_ver} can be upper bounded by a common quadratic form in the state, i.e.,
    \begin{align}\label{eq:upperbound1}
        (x^{T}Dx)\hat{\beta}^{*}_{k}(x) & \leq x^{T}D(\widetilde{P}_{k+1})^{-1}Ax, \\
        (x^{T}Ax)(1-\hat{\gamma}^{*}_{k}(x)) & \leq x^{T}D(\widetilde{P}_{k+1})^{-1}Ax, \label{eq:upperbound2}
    \end{align}
    where $\widetilde{P}_{k+1} := \widetilde{W}_{k}^{T}P^{1}_{k+1}\widetilde{W}_{k} - \widetilde{B}_{k}^{T}P^{0}_{k+1}\widetilde{B}_{k}, \ \widetilde{W}_{k} := (F_{k} + B_{k}W_{k})$ and $\widetilde{B}_{k} := (F_{k} - B_{k}K_{k})$. 

Then, the value functions corresponding to each \texttt{FlipDyn} state are given by $V_{k}^0(x) = x^TP_k^0x, \ V_k^1(x) = x^TP_k^1x$, where the matrices $P_k^0$ and $P_k^1$ are chosen to satisfy 
\begin{align*}
		P_{k}^{0} & \preceq Q + D + \widetilde{B}_{k}^{T}P^{0}_{k+1}\widetilde{B}_{k} - D\widetilde{P}_{k+1}^{-1}A, \\
		P_{k}^{1} & \succeq Q - A + \widetilde{W}_{k}^{T}P^{1}_{k+1}\widetilde{W}_{k} + D\widetilde{P}_{k+1}^{-1}A. 
\end{align*}
\end{theorem}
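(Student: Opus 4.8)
The plan is to run a backward induction in $k$, carrying the inductive hypothesis that at stage $k+1$ the value functions are the quadratic forms $V_{k+1}^0(x)=x^TP_{k+1}^0x$ and $V_{k+1}^1(x)=x^TP_{k+1}^1x$, and then propagating these one step back through the policy-substituted recursions \eqref{eq:P_V_al0_QC_pol_ver} and \eqref{eq:P_V_al1_QC_pol_ver} together with the assumed bounds \eqref{eq:upperbound1}--\eqref{eq:upperbound2}. The base case is the terminal stage $L$, where $V_L^0(x)=x^TQx$ gives $P_L^0=Q$ and $P_L^1$ is fixed by the terminal instance of the mixed-NE gap condition \eqref{eq:Lemma1} (the multidimensional analogue of the scalar terminal condition in Theorem~\ref{th:Value_of_game_scalar}). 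That gap condition is exactly what forces $\widetilde P_{k+1}\succ 0$, making $\widetilde P_{k+1}^{-1}$ and hence the bounding quadratic $x^TD\widetilde P_{k+1}^{-1}Ax$ well defined at every stage.

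The algebraic observation that drives the argument is that the two nonlinear terms to be bounded are identical. Substituting $\hat\beta_k^*=x^TAx/x^T\widetilde P_{k+1}x$ and $1-\hat\gamma_k^*=x^TDx/x^T\widetilde P_{k+1}x$ from Corollary~\ref{cor:policy_general_QC}, and using $\widetilde V_{k+1}(x)=x^T\widetilde P_{k+1}x$ under the quadratic hypothesis, gives
\[
(x^TDx)\,\hat\beta_k^*(x)=(x^TAx)\,(1-\hat\gamma_k^*(x))=\frac{(x^TDx)(x^TAx)}{x^T\widetilde P_{k+1}x}\ge 0 .
\]
Hence a single common quadratic upper bound suffices for both recursions, which is precisely the content of \eqref{eq:upperbound1}--\eqref{eq:upperbound2}. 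Inserting this bound into the two recursions, I would exploit the opposite signs with which the nonlinear term enters: in the $\alpha_k=0$ recursion it is \emph{subtracted}, so replacing it by its upper bound produces a \emph{lower} bound $V_k^0(x)\ge x^T\bigl(Q+D+\widetilde B_k^TP_{k+1}^0\widetilde B_k-D\widetilde P_{k+1}^{-1}A\bigr)x$; in the $\alpha_k=1$ recursion it is \emph{added}, so the same replacement produces an \emph{upper} bound $V_k^1(x)\le x^T\bigl(Q-A+\widetilde W_k^TP_{k+1}^1\widetilde W_k+D\widetilde P_{k+1}^{-1}A\bigr)x$. Choosing $P_k^0$ and $P_k^1$ to satisfy the two matrix relations in the statement then makes $x^TP_k^0x$ and $x^TP_k^1x$ valid one-sided quadratic surrogates for $V_k^0$ and $V_k^1$, closing the induction.

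The last routine step converts the scalar, for-all-$x$ inequalities into the matrix relations $\preceq,\succeq$. Since $x^TMx=x^T\mathrm{sym}(M)x$, the term $x^TD\widetilde P_{k+1}^{-1}Ax$ depends only on $\mathrm{sym}(D\widetilde P_{k+1}^{-1}A)$, and ``$x^TM_1x\le x^TM_2x$ for all $x$'' is equivalent to $\mathrm{sym}(M_1)\preceq\mathrm{sym}(M_2)$; applying this equivalence to the two displayed bounds yields exactly the claimed relations for $P_k^0$ and $P_k^1$. This symmetrization, handling the non-symmetric product $D\widetilde P_{k+1}^{-1}A$, is the one place where care is needed in the bookkeeping.

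The step I expect to be the genuine obstacle is closing the induction \emph{consistently}, because $V_{k+1}^0$ and $V_{k+1}^1$ enter the exact one-step value \eqref{eq:V_al0_QC}--\eqref{eq:V_al1_QC} not only additively but also through the denominator $\widetilde V_{k+1}(x)=V_{k+1}^1(f_k^1(x))-V_{k+1}^0(f_k^0(x))$. The hypothesis supplies a lower bound on $V_{k+1}^0$ and an upper bound on $V_{k+1}^1$, i.e.\ bounds in opposite directions, and one must verify that these propagate through the rational nonlinearity without reversing the sign of the final inequalities. It is precisely here that positive definiteness of $D$, $A$, and $\widetilde P_{k+1}$ (guaranteeing a nonnegative, monotone nonlinear term) is invoked, and it is the reason the construction yields only one-sided bounds---a lower bound for $V_k^0$ and an upper bound for $V_k^1$---rather than the exact equality available in the scalar case of Theorem~\ref{th:Value_of_game_scalar}.
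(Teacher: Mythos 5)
Your proposal is correct and takes essentially the same route as the paper's own (very terse) proof: substitute the common quadratic upper bound \eqref{eq:upperbound1}--\eqref{eq:upperbound2} into the policy-substituted recursions \eqref{eq:P_V_al0_QC_pol_ver}--\eqref{eq:P_V_al1_QC_pol_ver}, exploit the opposite signs with which the (identical) nonlinear term $(x^TDx)(x^TAx)/(x^T\widetilde{P}_{k+1}x)$ enters, and convert the resulting for-all-$x$ scalar inequalities into the stated matrix relations, so that $x^TP_k^0x$ and $x^TP_k^1x$ are one-sided quadratic surrogates rather than exact values. The extra scaffolding you supply --- the backward-induction framing with terminal conditions, the symmetrization of the non-symmetric product $D\widetilde{P}_{k+1}^{-1}A$, and the flagged difficulty of propagating one-sided bounds through the denominator $\widetilde{V}_{k+1}$ --- is more careful than the paper, which treats the theorem as a purely one-step conditional statement (the quadratic parameterization at stage $k+1$ is assumed, and the approximate recursion \eqref{eq:Val_al0_QC_approx_final}--\eqref{eq:Val_al1_QC_approx_final} is then adopted with equality), so your ``genuine obstacle'' lies outside what the paper actually proves rather than constituting a gap in your argument.
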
 

\begin{proof} To prove this claim, we substitute \eqref{eq:upperbound1} and \eqref{eq:upperbound2} into \eqref{eq:P_V_al0_QC_pol_ver} and \eqref{eq:P_V_al1_QC_pol_ver}, respectively, to obtain
\begin{align*}
		x^{T}P_{k}^{0}x & \leq x^{T}(Q + D + \widetilde{B}_{k}^{T}P^{0}_{k+1}\widetilde{B}_{k} - D\widetilde{P}_{k+1}^{-1}A)x, \\
		x^{T}P_{k}^{1}x & \geq x^{T}(Q - A + \widetilde{W}_{k}^{T}P^{1}_{k+1}\widetilde{W}_{k} + D\widetilde{P}_{k+1}^{-1}A)x.
\end{align*}
Since these inequalities hold for every $x \in \mathbb{R}^n$, the above claim is proven. 
\end{proof}

\medskip

The next result shows that conditions \eqref{eq:upperbound1} and \eqref{eq:upperbound2} do hold for a special class of matrices $A$ and $D$. 
\begin{proposition}
Conditions \eqref{eq:upperbound1} and \eqref{eq:upperbound2} hold for any positive definite matrix $\widetilde{P}_{k+1}$ if 
\[
A = aI, \text{ and } D = dI, \quad \text{ for any $a, d > 0$}.
\]
\end{proposition}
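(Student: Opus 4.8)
The plan is to substitute the special structure $A = aI$ and $D = dI$ directly into the two inequalities and show that both collapse to a single classical quadratic-form inequality. First I would recall from Corollary~\ref{cor:policy_general_QC} that $\hat{\beta}_k^*(x) = \frac{x^T A x}{x^T \widetilde{P}_{k+1} x}$ and $1 - \hat{\gamma}_k^*(x) = \frac{x^T D x}{x^T \widetilde{P}_{k+1} x}$. Writing $P := \widetilde{P}_{k+1}$, which is positive definite by hypothesis so that $x^T P x > 0$ for every $x \neq 0$, the left-hand side of \eqref{eq:upperbound1} becomes $\frac{(x^T D x)(x^T A x)}{x^T P x}$, and with $A = aI$, $D = dI$ this equals $\frac{da\,(x^T x)^2}{x^T P x}$. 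The right-hand side is $x^T D P^{-1} A x = da\, x^T P^{-1} x$, using that $P^{-1}$ is symmetric. Cancelling the positive factor $da$ shows that \eqref{eq:upperbound1} is equivalent to $(x^T x)^2 \le (x^T P x)(x^T P^{-1} x)$.

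The same computation applied to \eqref{eq:upperbound2} yields an identical reduction: its left-hand side $\frac{(x^T A x)(x^T D x)}{x^T P x}$ is symmetric in the roles of $A$ and $D$, so under $A = aI$, $D = dI$ it coincides with the expression above, and the right-hand side $x^T D P^{-1} A x$ is unchanged. Thus both conditions reduce to establishing the single inequality $(x^T x)^2 \le (x^T P x)(x^T P^{-1} x)$ for every $x \in \mathbb{R}^n$ and every positive definite $P$.

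The final step is to prove this inequality, which I would recognize as a Cauchy--Schwarz estimate. Since $P \succ 0$, the symmetric square roots $P^{1/2}$ and $P^{-1/2}$ are well defined, and I would write $x^T x = (P^{1/2} x)^T (P^{-1/2} x)$. Applying Cauchy--Schwarz to the vectors $P^{1/2} x$ and $P^{-1/2} x$ gives
\[
(x^T x)^2 \le \|P^{1/2} x\|^2 \, \|P^{-1/2} x\|^2 = (x^T P x)(x^T P^{-1} x),
\]
which is exactly the required bound. I do not anticipate a genuine obstacle here: the only conceptual step is spotting that the factorization through $P^{1/2}$ turns the two seemingly distinct bounds into one Cauchy--Schwarz inequality, and the positive definiteness of $\widetilde{P}_{k+1}$ guarantees that all quantities are well defined and that the divisions by $x^T P x$ are legitimate.
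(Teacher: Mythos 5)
Your proof is correct and follows essentially the same route as the paper: the same reduction of both conditions to $(x^Tx)^2 \le (x^T\widetilde{P}_{k+1}x)(x^T\widetilde{P}_{k+1}^{-1}x)$ via $A = aI$, $D = dI$, and the same factorization through $\widetilde{P}_{k+1}^{1/2}$ and $\widetilde{P}_{k+1}^{-1/2}$. The only cosmetic difference is that the paper proves the final inequality by noting the $2\times 2$ Gram matrix of these two vectors is positive semidefinite and taking its determinant, which is just an inline proof of the Cauchy--Schwarz inequality you invoke directly.
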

\begin{proof}
Substituting these choices of $A$ and $D$ into \eqref{eq:upperbound1} and \eqref{eq:upperbound2} yield the following inequality to be established.
\begin{equation}\label{eq:prop1}
\frac{x^Tx}{x^T\widetilde{P}_{k+1}x} \leq \frac{x^T\widetilde{P}_{k+1}^{-1}x}{x^Tx}.
\end{equation}
Setting, 
\[
\Gamma := \widetilde{P}_{k+1}^{1/2}x, \Phi := \widetilde{P}_{k+1}^{-1/2}x,
\]
observe that the $2\times 2$ matrix
\[
\mathcal{M} := \begin{bmatrix} \Gamma^T\Gamma & \Gamma^T\Phi \\ \Phi^T\Gamma & \Phi^T\Phi \end{bmatrix} = \begin{bmatrix} \Gamma & \Phi \end{bmatrix}^T\begin{bmatrix} \Gamma & \Phi \end{bmatrix} \succeq 0.
\]
Therefore, 
\[
\text{det}(\mathcal{M}) = (\Gamma^T\Gamma)(\Phi^T\Phi) - (\Gamma^T\Phi)^2 \geq 0,
\]
and thus our claim \eqref{eq:prop1} holds.
\end{proof}

\medskip

Theorem~\ref{cor:Val_QC_ND} enables a recursive computation for an approximate value function  independently of the state using the parameters,
\begin{align}
	\label{eq:Val_al0_QC_approx_final}
		\hat{P}_{k}^{0} &= Q + D + \widetilde{B}_{k}^{T}\hat{P}^{0}_{k+1}\widetilde{B}_{k} - D\check{P}_{k+1}^{-1}A, \\
	\label{eq:Val_al1_QC_approx_final}
		\hat{P}_{k}^{1} &= Q - A + \widetilde{W}_{k}^{T}\hat{P}^{1}_{k+1}\widetilde{W}_{k} + D\check{P}_{k+1}^{-1}A,
\end{align}
where  $\check{P}_{k+1} := \widetilde{W}_{k}^{T}\hat{P}^{1}_{k+1}\widetilde{W}_{k} - \widetilde{B}_{k}^{T}\hat{P}^{0}_{k+1}\widetilde{B}_{k}, \ \widetilde{W}_{k} := (F_{k} + B_{k}W_{k})$ and $\widetilde{B}_{k} := (F_{k} - B_{k}K_{k})$
such that 
    $$\check{P}_{k+1} \succcurlyeq A \text{ and } \check{P}_{k+1} \succcurlyeq D, \quad \forall k \in \{1,2,\dots,L\}.$$
We initialize the parameterized value function at the terminal time instant $L$ as,
\begin{equation}
    \label{eq:V_N_terminal}
    \hat{P}_{L}^{0} = Q, \quad \hat{P}_{L}^{1} = \begin{cases}
        Q + A + \mu I, & \text{if } A \succcurlyeq D \\
        Q + D + \mu I, & \text{otherwise }
    \end{cases} ,
\end{equation}
where $\mu$ is a  constant. We demonstrate numerical results of this procedure in the next section.


\begin{figure*}[h]
	\begin{center}
		\subfloat[]{\includegraphics[width = 0.24\linewidth]{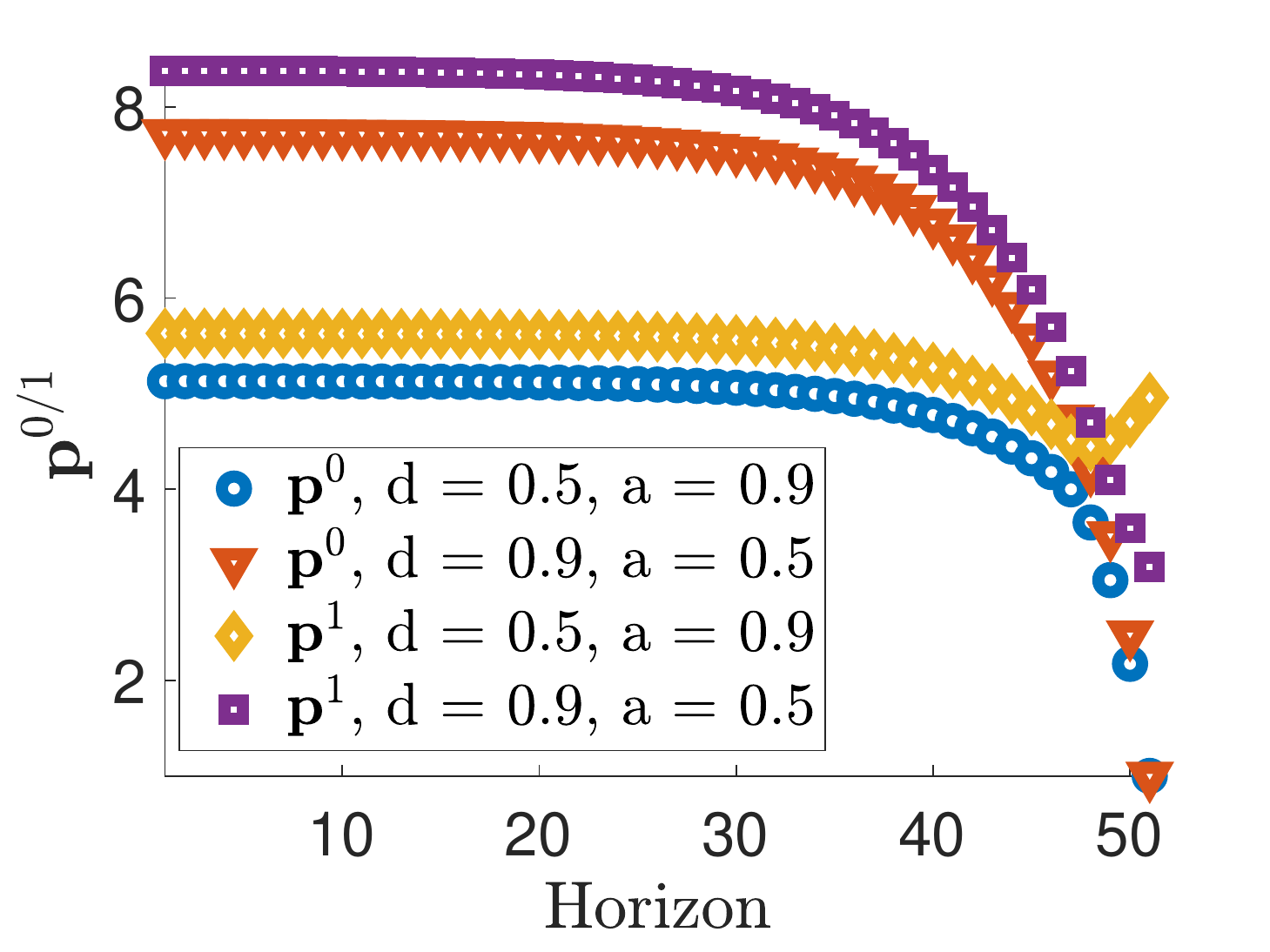}
			\label{fig:p01_1D}	
		}
		\subfloat[]{\includegraphics[width = 0.24\linewidth]{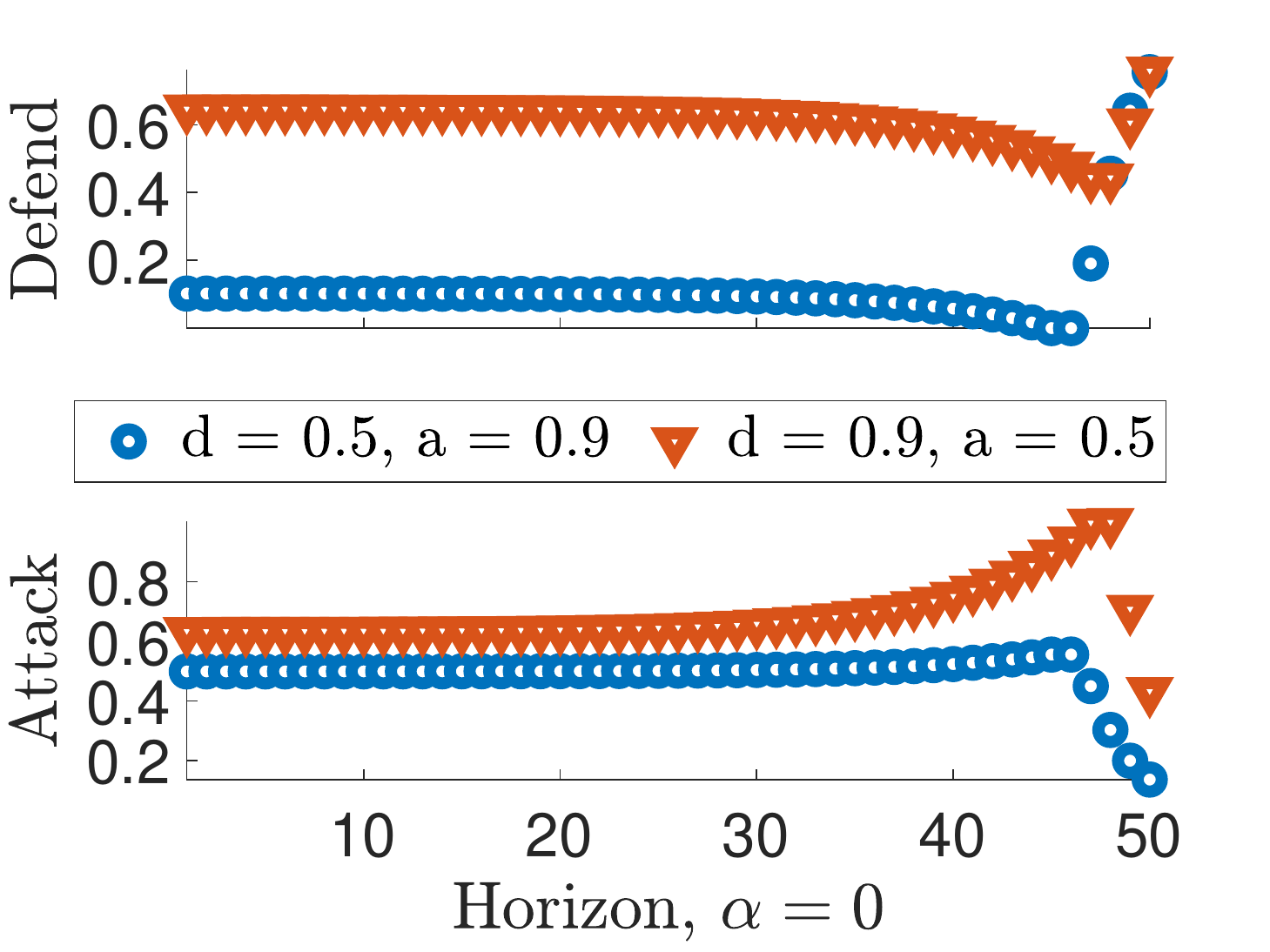}
			\label{fig:1D_policy}	
		}
		\subfloat[]{\includegraphics[width = 0.24\linewidth]{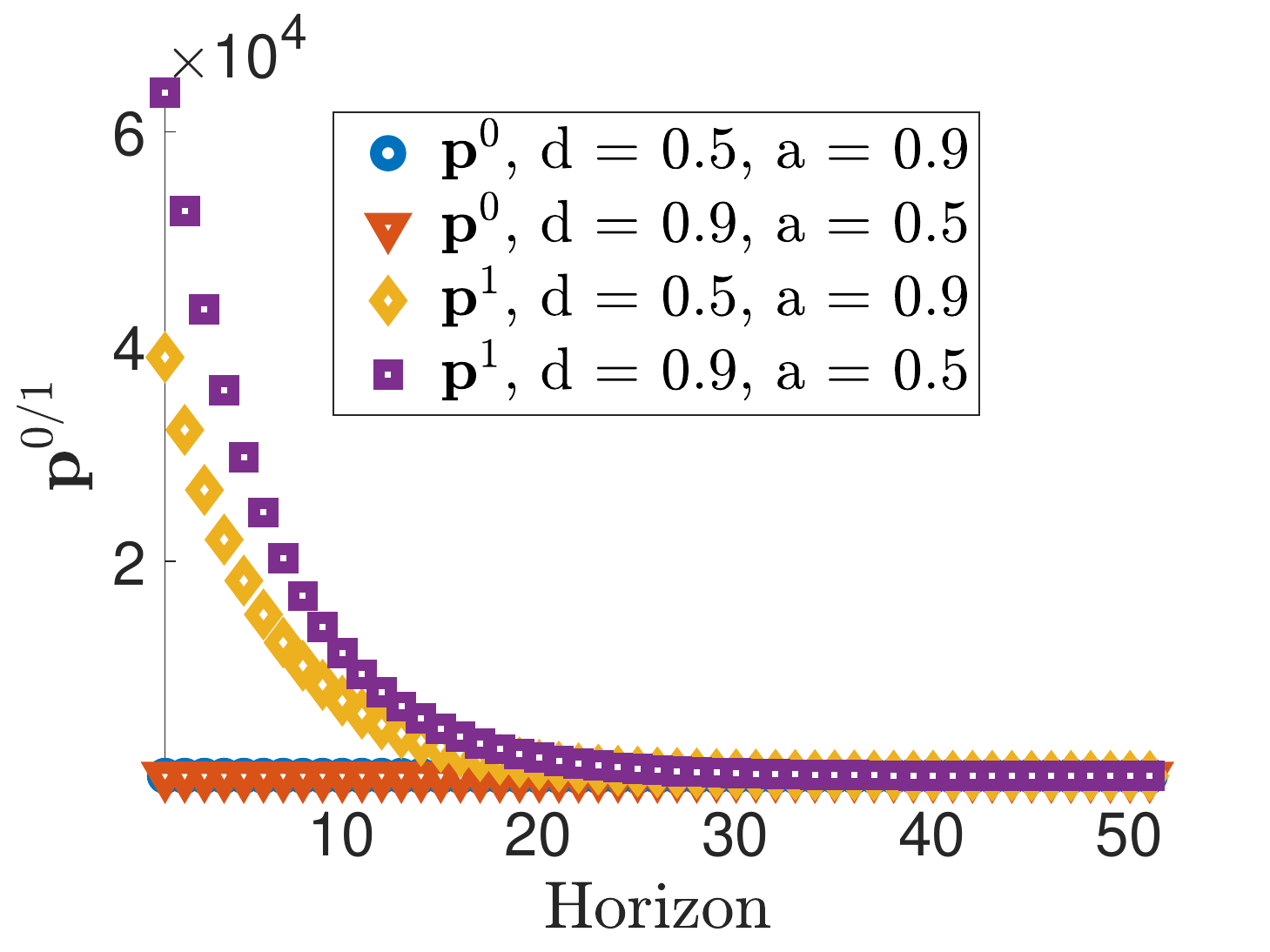}
			\label{fig:p01_1D_UB}	
		}
		\subfloat[]{\includegraphics[width = 0.24\linewidth]{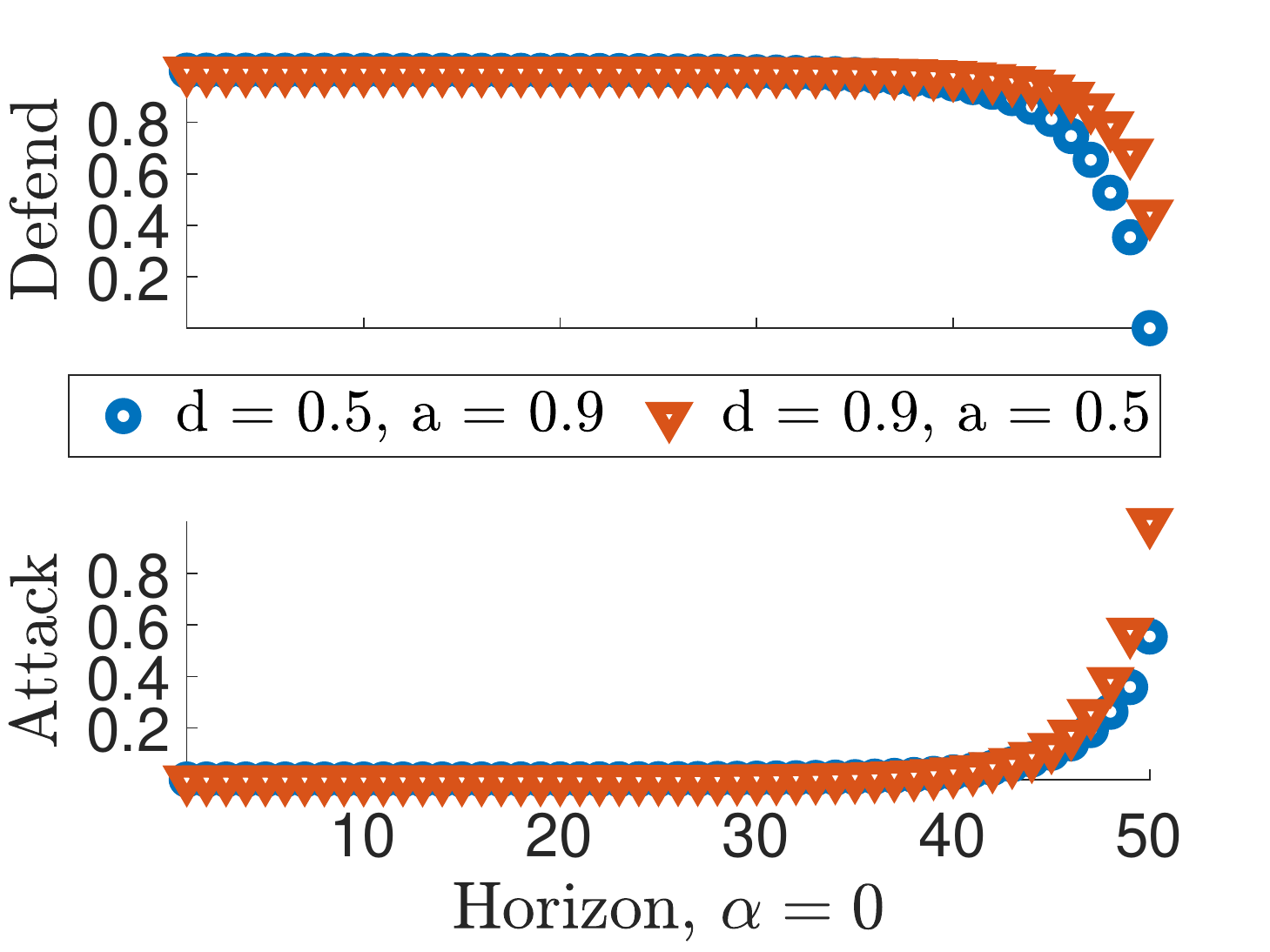}
			\label{fig:1D_policy_UB}	
		}
		\caption{\small (a) Coefficient of the parameterized value function, $\mathbf{p^{0}}$ and $\mathbf{p^{1}}$ for a 1-dimensional system where the state is bounded ($F \leq 1$) over a horizon length of $L = 50$. (b) Attack and defense policy corresponding to the value function in Figure~\ref{fig:p01_1D} for the given set of costs. (c) Coefficient of the parameterized value function, $\mathbf{p^{0}}$ and $\mathbf{p^{1}}$ for an unbounded ($F \geq 1$) 1-dimensional system with a horizon length of $L = 50$. (d) Policy of defense and attack for the obtained parameterized value function indicated in Figure~\ref{fig:p01_1D_UB}.}
		\label{fig:FlipDyn_scalar_case}
	\end{center}
\end{figure*}
\begin{figure*}[h]
	\begin{center}
		\subfloat[]{\includegraphics[width = 0.24\linewidth]{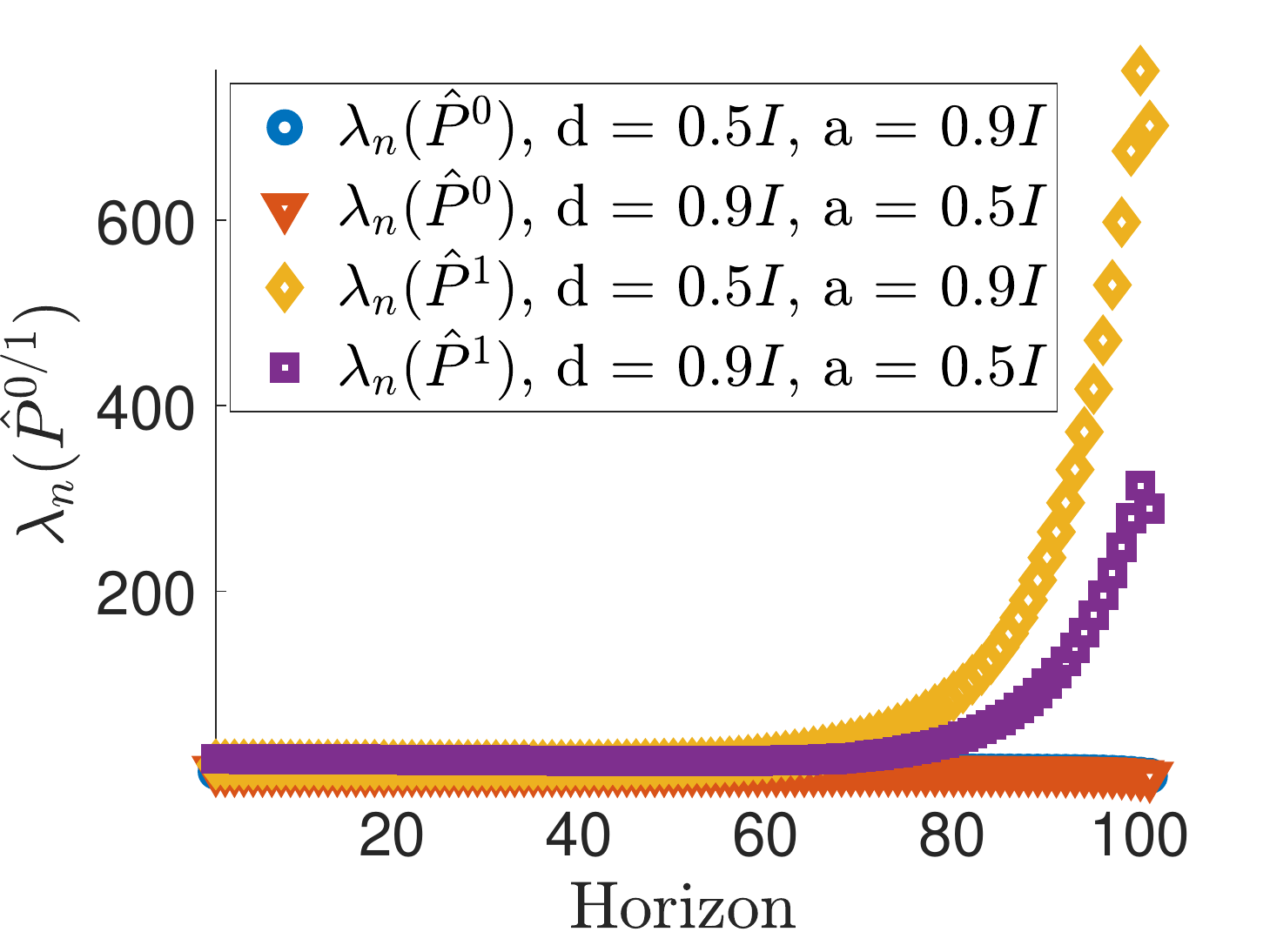}
			\label{fig:P01_ND}	
		}
		\subfloat[]{\includegraphics[width = 0.24\linewidth]{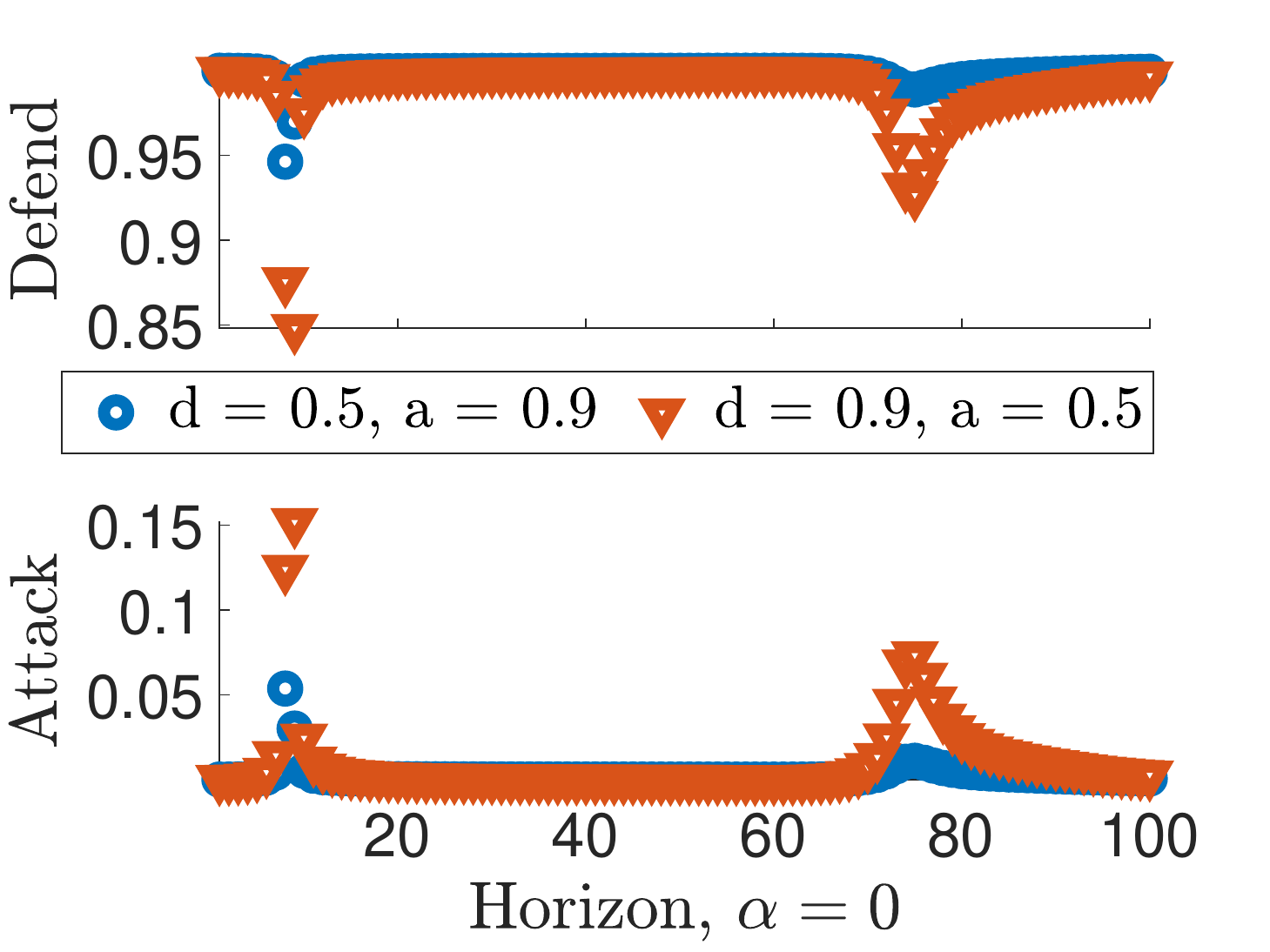}
			\label{fig:ND_policy}	
		}
		\subfloat[]{\includegraphics[width = 0.24\linewidth]{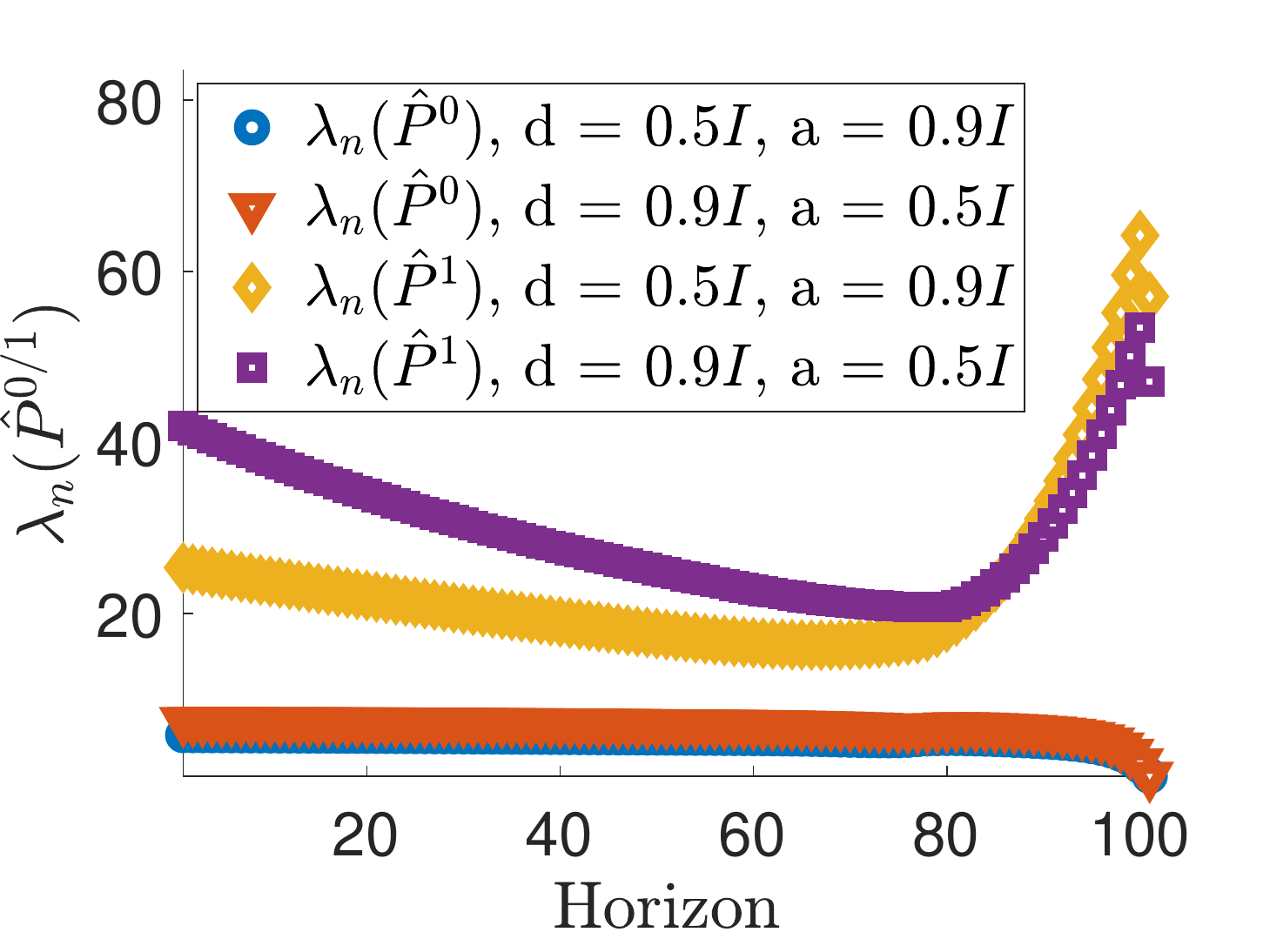}
			\label{fig:P01_ND_UB}	
		}
		\subfloat[]{\includegraphics[width = 0.24\linewidth]{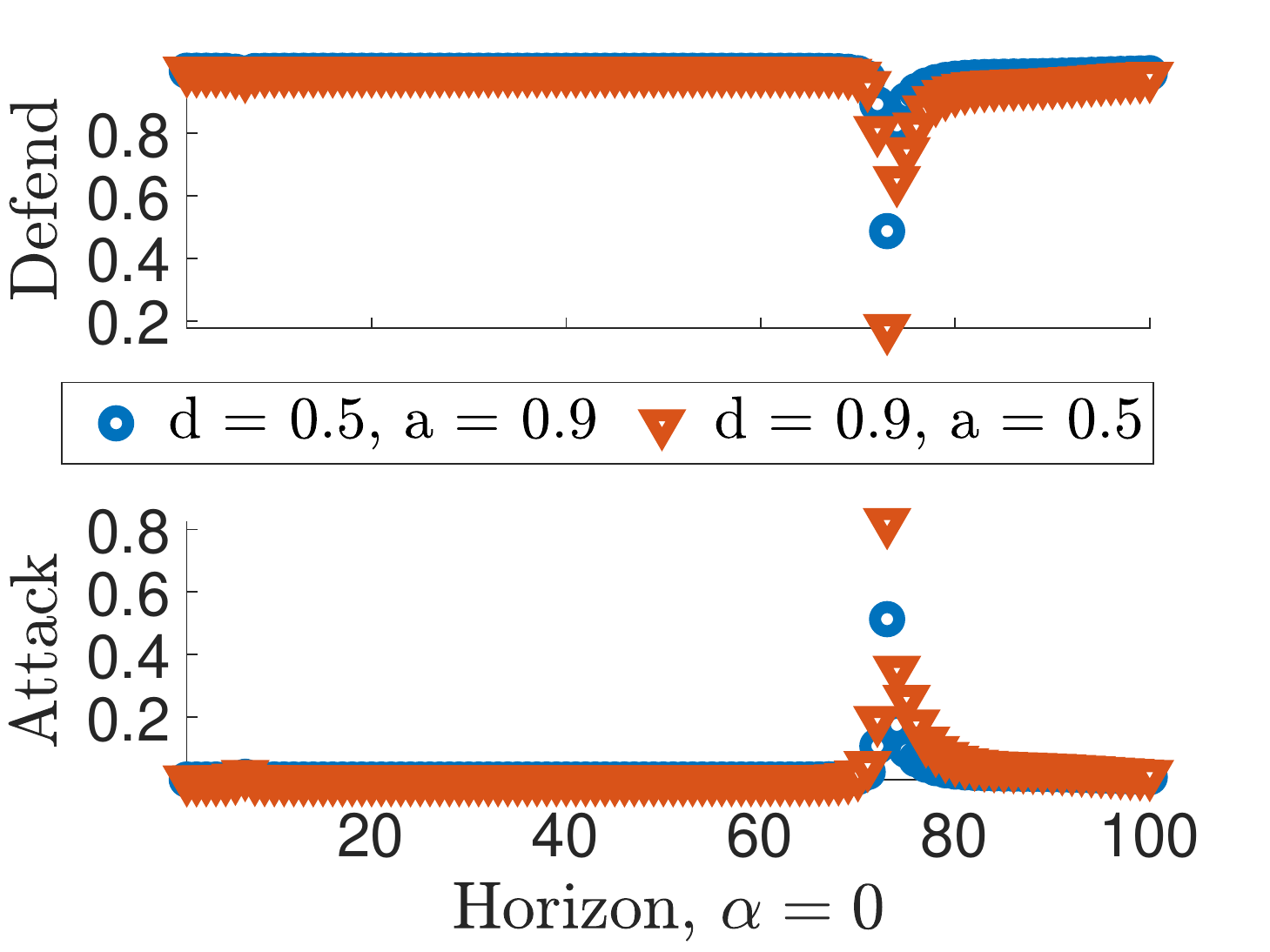}
			\label{fig:ND_policy_UB}	
		}
		\caption{\small (a) Minimum eigenvalue of the parameterized semi-definite value function matrices, $\lambda_{n}(\hat{P}^{0})$ and $\lambda_{n}(\hat{P}^{1})$ for a $n$-dimensional system, where the state is bounded ($\hat{f} \leq 1$) over a horizon length of $L = 100$. (b) Attack and defense policies corresponding to the value function in Figure~\ref{fig:P01_ND} for the given set of state, defense and attack costs. (c)  Minimum eigenvalue of the parameterized semi-definite value function matrices, $\lambda_{n}(\hat{P}^{0})$ and $\lambda_{n}(\hat{P}^{1})$ for an unbounded ($\hat{f} \geq 1$) $n$-dimensional system for the same horizon length of $L = 100$. (d) Attack and defense policies for the corresponding value function from Figure~\ref{fig:P01_ND_UB} given the \texttt{FlipDyn} state $\alpha = 0$.}	
		\label{fig:FlipDyn_vector_case}
	\end{center}
	\vspace{-0.25in}
\end{figure*}
\section{Numerical Evaluation}\label{sec:Numerics}
In this section, we will evaluate our analytic results from Theorem~\ref{th:Value_of_game_scalar} and Theorem~\ref{cor:Val_QC_ND} on a linear time-invariant system (LTI) in conjunction with a linear quadratic regulator (LQR) control law used by the defender. Without any loss of generality, we make the following mild assumption.
\begin{assumption}
	\label{ast:Intial_control}
	The system is under the defender's control at time $k = 0$, i.e, $\alpha_{0} := 0$. 
\end{assumption}	
Assumption~\ref{ast:Intial_control} is only for convenience and is reasonable to expect that the system designer would have complete control of the system upon initialization. 

Given the system description, we will now specify the parameters of the \texttt{FlipDyn} game. The dynamical system is assumed to be given by
\begin{align*}
	f^0_k(x_k) &= (F - BK)x_k, \\
	f^1_k(x_k) &= (F + BW)x_k = Fx_k, 
\end{align*}
where we have assumed that the adversary's control gain $W_k = W = \mathbf{0}, \forall k \in \{1,2,\dots, L\}$, i.e., the adversary applies zero control input commands deterring or deviating the state from reaching its equilibrium state. We now report the results of simulating the policies, and value of the \texttt{FlipDyn} game for a $1$-dimensional (1-D) and an $n$-dimensional ($n$-D) system below.
\subsubsection{1-Dimensional system}
In this setting, we use a single integrator system of the form 
\begin{align*}
	f^0_k(x_k) &= (F - \Delta K)x_k, \\
	f^1_k(x_k) &= (F + BW)x_k = Fx_k, 
\end{align*}
where $\Delta$ is the sample time. We consider two cases: $F \leq 1$ and $F > 1$. We obtain the defender gain $K$ by solving the LQR problem with arbitrarily weighted state and control cost. We solve for the value function coefficients ($\mathbf{p}_{k}^{0},\mathbf{p}_{k}^{1}$) and its corresponding policies over a time horizon of $L = 50$ for $F = 0.99$ in Figure~\ref{fig:p01_1D}, and for $F = 1.1$ in Figure~\ref{fig:p01_1D_UB}. 

We observe that the coefficients of the value function are bounded and reach an asymptotic value for $F = 0.99$, i.e., when the state is bounded so are the value coefficients, whereas for $F = 1.1$, the coefficient $\mathbf{p}^{1}$ keeps growing and is unbounded, thus indicating a very large incentive for an adversary in the initial stages of the \texttt{FlipDyn} game. The attack and defense policy conditioned on the \texttt{FlipDyn} state $\alpha = 0$, for bounded and unbounded value function coefficients are shown in Figure~\ref{fig:1D_policy} and~\ref{fig:1D_policy_UB}, respectively. For the case when $F = 0.99$, we observe that the defense and attack policy are initially dynamic and gradually converge to a stationary policy. In the case of $F > 1$, as the horizon $L$ keeps increasing, both the defender and adversary policies converge to a pair of pure policies of always defending and not attack when the \texttt{FlipDyn} state is $\alpha = 0$, reflective of the unbounded value coefficients. 

\subsubsection{$n$-Dimensional system}
We use a double integrator dynamics $(n = 2)$ of the form 
\begin{align*}
	f^0_k(x_k) &= \Bigg(\underbrace{\begin{bmatrix}
		    \hat{f} & \Delta \\ 0 & \hat{f}
		\end{bmatrix}}_{F} -  \underbrace{\begin{bmatrix}
		    0.5\Delta^{2} \\ \Delta
		\end{bmatrix}}_{B}K\Bigg)x_{k}, \\
	f^1_k(x_k) &= Fx_k, 
\end{align*}
where $\Delta > 0$ is the sample time analogous to the scalar case. The system represents a second order system with acceleration as the control input. Analogous to the scalar case, we obtain the defender's gain $K$ using the LQR method. We solve the approximate parameterized value function matrices ($\hat{P}^{0}_{k}, \hat{P}^{1}_{k}, \forall k \in \{1,\dots,L\}$) for both  \texttt{FlipDyn} states over a horizon length $L = 100$. The minimum eigenvalue of the value function matrices are shown in Figures~\ref{fig:P01_ND} and~\ref{fig:P01_ND_UB}, corresponding to $\hat{f} := 0.99$ and $\hat{f} := 1.01$, respectively. 

Akin to the scalar case, we observe a similar trend of converging coefficients when $\hat{f} \leq 1$, i.e., the system remains bounded upon lack of control, whereas the coefficients diverge away for $\hat{f} > 1$ indicating a large incentive for an adversary in the initial time instants of the \texttt{FlipDyn} game. Since the player policies for the $n$-dimensional case are functions of state, and the \texttt{FlipDyn} state is a random variable, the attack and defense policies averaged over 500 independent simulations for $\hat{f} := 0.99$ and $\hat{f} := 1.01$ are shown in Figures~\ref{fig:ND_policy} and~\ref{fig:ND_policy_UB}, respectively, with the initial state $x_{0} = \begin{bmatrix}
    0 & 1
\end{bmatrix}^{T}$. We observe a dynamic policy over the horizon length for the case of $\hat{f} := 0.99$, and a converging pure policy for $\hat{f} := 1.01$ for the \texttt{FlipDyn} state $\alpha = 0$, respectively. The converging pure policy for $\hat{f} := 1.01$ is reflective of the ever increasing value of the adversary over the horizon length. 
\medskip

\subsubsection{Regaining control}
For the above $n$-dimensional system, we observe that the \texttt{FlipDyn} state evolution over the horizon length in Figure~\ref{fig:FD_normal} for $\hat{f} \leq 1$ and $\hat{f} > 1$, respectively. The \texttt{FlipDyn} state evolution for the dynamical  system is averaged over 500 independent simulation runs. We carry out simulations to observe recovery of control by the defender, i.e., for the described $n$-D system after intentionally switching the control over to an adversary. From Figure~\ref{fig:FD_recover}, we conclude that the system moves towards recovery, i.e., back to the defender's control indicated by the \texttt{FlipDyn} state moving towards zero over the horizon length. Furthermore, when the adversary's cost is lower than the defender's, the recovery rate is higher indicating a more aggressive defender to regain back control.

\begin{figure}[h]
	\begin{center}
		\subfloat[]{\includegraphics[width = 0.49\linewidth]{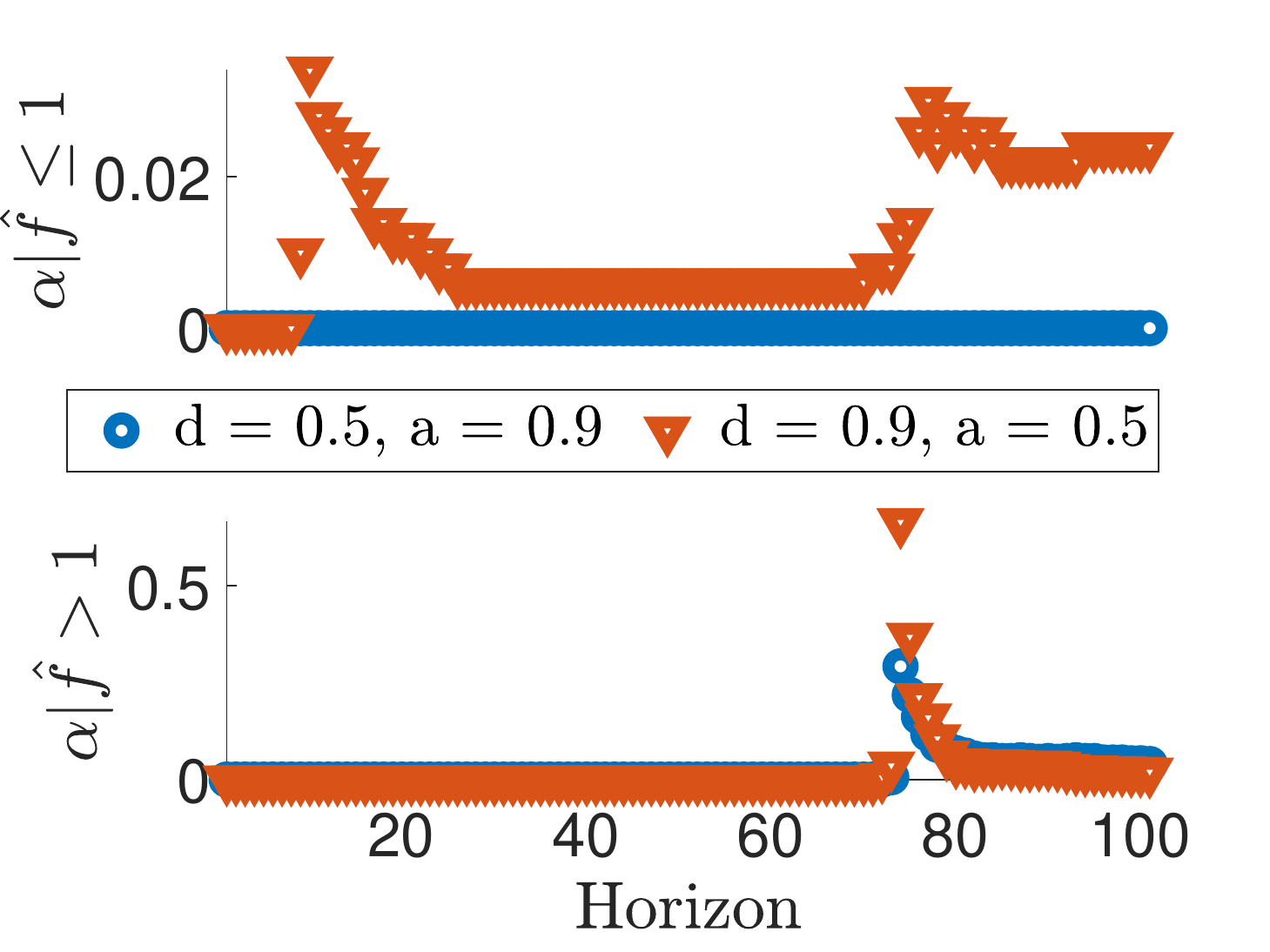}
			\label{fig:FD_normal}	
		}
		\subfloat[]{\includegraphics[width = 0.49\linewidth]{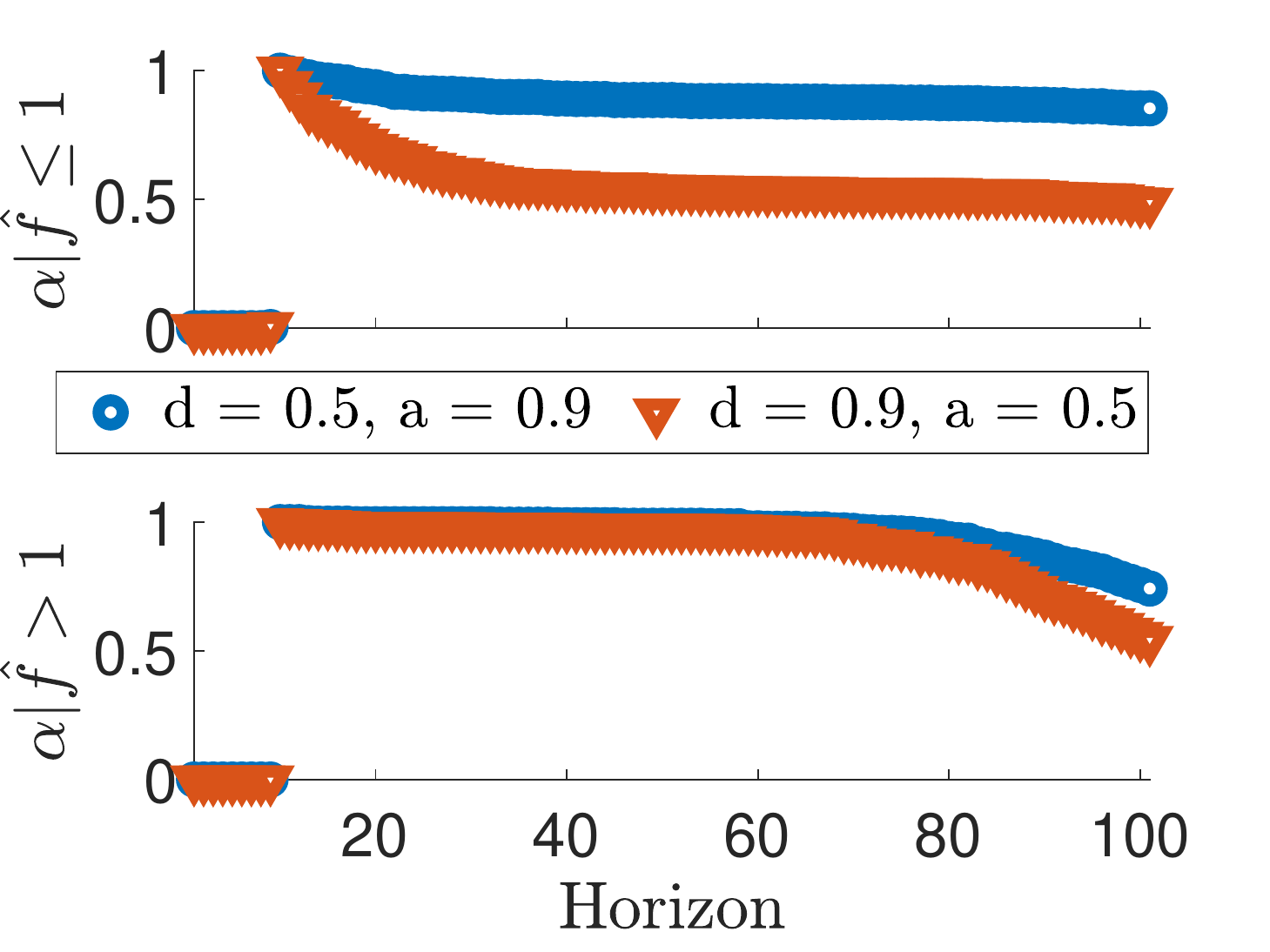}
			\label{fig:FD_recover}	
		}
		\caption{\small (a) \texttt{FlipDyn} state, $\alpha$ over the horizon length of $L = 100$ for an $n$-dimensional system with $\alpha_{0} = 0$. (b) \texttt{FlipDyn} state, $\alpha$ over the same horizon length of $L = 100$ for an $n$-dimensional system, with $\alpha_{0} = 0$ and $\alpha_{10} = 1$, i.e., an initial defender control and a sudden takeover by an adversary at time instant 10.}	
		\label{fig:FlipDyn_state}
	\end{center}
	\vspace{-0.25in}
\end{figure}

\section{Conclusion and Future Directions}\label{sec:Conclusion}
We introduced a resource takeover game between a defender and an adversary, in which the resource represents the control input signals of a dynamical system. We posed the takeover problem as a zero-sum two-player game over a finite time period, inspired by the well-studied FlipIT model. The payoffs for our \texttt{FlipDyn} game are modeled as state-dependent costs incurred by both the defender and adversary. We computed for the policy of each player, i.e., at what time instances should a player choose to takeover the resource. We derived the value of the physical state for a given \texttt{FlipDyn} state and the corresponding policy for any general system. In particular, we derived closed-form expressions for linear dynamical system leading to an exact value function computation for the 1-dimensional case, and an approximate value function for $n$-dimensional systems. Finally, we illustrate the results of the FlipDyn game on numerical examples and comment on the recovery of such a setup from loss of control.


Our current work relies on full state observability of even the \texttt{FlipDyn} state. In future works, we will assume only output feedback for inferring the \texttt{FlipDyn} state of the system. We also plan to include bounded process and measurement noise and evaluate its impact on the policy of the \texttt{FlipDyn} game. Finally, we will compare the existing and future solution against a learning-based method to explore which one would be a better fit for practical systems.

\addtolength{\textheight}{-12cm}   




\bibliographystyle{IEEEtran}
\bibliography{references}

\end{document}